\documentclass[11pt]{article}  

\usepackage{amssymb}
\usepackage{amsmath,amssymb}
\usepackage{amsthm}

\usepackage{epsfig,graphicx}

\usepackage[noend]{algorithmic}
\usepackage{algorithm}

\usepackage{fullpage}
\usepackage{subfigure}
\usepackage{times}
\usepackage{url}

\newif\iflong
\longtrue

\setlength{\textheight}{9in} \setlength{\textwidth}{6.5in}

\pagestyle{plain}

\newtheorem{definition}{Definition}

\newtheorem{theorem}{Theorem}
\newtheorem{lemma}{Lemma}

\IfFileExists{../EmanueleViolaDir.txt}{\def\EmanueleViolaDir{1}}{
\def\EmanueleViolaDir{0}}

\ifnum\EmanueleViolaDir=1
\usepackage{../theomac}
\else
\usepackage{theomac}
\fi

\newtheoremWithMacro{theoremR}[lemma]{Theorem}
\newtheoremWithMacro{corollaryR}[lemma]{Corollary}
\newtheoremWithMacro{lemmaR}[lemma]{Lemma}
\newtheoremWithMacro{factR}[lemma]{Fact}


\begin{document}

\newcommand{\junk}[1]{}
\newcommand{\eps}{\epsilon}

\newcommand{\sqb}[1]{\left[ #1 \right]} 
\newcommand{\cub}[1]{\left\{ #1 \right\} } 
\newcommand{\rb}[1]{\left( #1 \right)} 

\newcommand{\prob}[1]{\Pr \left[ #1 \right]}
\newcommand{\expect}[1]{\mathbb{E}\left[ #1 \right]}
\newcommand{\dprob}[3]{\Pr \left[ #1 ; #2 \mbox{;} #3 \right]}
\newcommand{\dsumprob}[2]{P(#1,#2)}

\newcommand{\symdiff}{\mbox{SYM-DIFF}}
\newcommand{\polylog}{\mbox{polylog}}
\newenvironment{LabeledProof}[1]{\noindent{\bf Proof of #1: }}{\qed}

\newcommand{\E}{\mathop{\mathbb E}}
\newcommand{\var}{\mathsf{Var}}
\newcommand{\ent}{\mathsf{H}}
\newcommand{\vol}{\mathsf{Vol}}


\begin{titlepage}

\title{\Large Global Information Sharing under Network Dynamics}
\author{Chinmoy Dutta\thanks{Twitter, San Francisco, USA. Email: {\tt
      chinmoy@twitter.com}.  This work was done while this author was
    at College of Computer and Information Science, Northeastern
    University, and was supported in part by NSF grant CCF-0845003 and
    a Microsoft grant to Ravi Sundaram.}  \and Gopal
  Pandurangan\thanks{Division of Mathematical Sciences, Nanyang
    Technological University, Singapore 637371 and Department of
    Computer Science, Brown University, Providence, RI 02912,
    USA. Email:~{\tt gopalpandurangan@gmail.com}.  Supported in part
    by the following research grants: Nanyang Technological University
    grant M58110000, Singapore Ministry of Education (MOE) Academic
    Research Fund (AcRF) Tier 2 grant MOE2010-T2-2-082, and a grant
    from the US-Israel Binational Science Foundation (BSF).}  \and
  Rajmohan Rajaraman\thanks{College of Computer and Information
    Science, Northeastern University, Boston, 02115, USA.  Email: {\tt
      rraj@ccs.neu.edu}. Supported in part by NSF grant CNS-0915985}
  \and Zhifeng Sun\thanks{Google, Seatle, USA. Email: {\tt
      austin@ccs.neu.edu}. This work was done while this author was at
    College of Computer and Information Science, Northeastern
    University, and was supported in part by NSF grant CNS-0915985}
  \and Emanuele Viola\thanks{College of Computer and Information
    Science, Northeastern University, Boston, 02115, USA.  Email: {\tt
      viola@ccs.neu.edu}. Supported by NSF grant CCF-0845003.}}

\date{}

\maketitle

\thispagestyle{empty}

\begin{abstract}
\begin{small}
We study how to spread $k$ tokens of information to every node on an
$n$-node dynamic network, the edges of which are changing at each
round. This basic {\em gossip problem} can be completed in $O(n +
k)$ rounds in any static network, and determining its complexity in
dynamic networks is central to understanding the algorithmic limits
and capabilities of various dynamic network models. Our focus is on
token-forwarding algorithms, which do not manipulate tokens in any way
other than storing, copying and forwarding them.

\vspace{0.1cm}

We first consider the {\em strongly adaptive} adversary model where in
each round, each node first chooses a token to broadcast to all its
neighbors (without knowing who they are), and then an adversary
chooses an arbitrary connected communication network for that round
with the knowledge of the tokens chosen by each node. We show that
$\Omega(nk/\log n + n)$ rounds are needed for any randomized
(centralized or distributed) token-forwarding algorithm to disseminate
the $k$ tokens, thus resolving an open problem raised
in~\cite{kuhn+lo:dynamic}. The bound applies to a wide class of
initial token distributions, including those in which each token is
held by exactly one node and {\em well-mixed} ones in which each node
has each token independently with a constant probability.

\vspace{0.1cm}

Our result for the strongly adaptive adversary model motivates us to
study the {\em weakly adaptive} adversary model where in each round,
the adversary is required to lay down the network first, and then each
node sends a possibly distinct token to each of its neighbors. We
propose a simple randomized distributed algorithm where in each round,
along every edge $(u,v)$, a token sampled uniformly at random from the
symmetric difference of the sets of tokens held by node $u$ and node
$v$ is exchanged. We prove that starting from any well-mixed
distribution of tokens where each node has each token independently
with a constant probability, this algorithm solves the $k$-gossip
problem in $O((n+k)\log n \log k)$ rounds with high probability over
the initial token distribution and the randomness of the protocol. We
then show how the above uniform sampling problem can be solved using
$\tilde O(\log n)$ bits of communication, making the overall algorithm
communication-efficient.

\vspace{0.1cm}

We next present a centralized algorithm that solves the gossip problem
for every initial distribution in $O((n + k)\log^2 n)$ rounds in the
offline setting where the entire sequence of communication networks is
known to the algorithm in advance. Finally, we present an $O(n
\min\{k, \sqrt{k \log n}\})$-round centralized offline algorithm in
which each node can only broadcast a single token to all of its
neighbors in each round.
\end{small}
\end{abstract}

{\bf Keywords:} Dynamic networks, Information Spreading, Gossip,
Distributed Computation, Communication Complexity

\end{titlepage}

\section{Introduction}
In a dynamic network, nodes (processors/end hosts) and communication
links can appear and disappear over time.  Modern networking
technologies such as ad hoc wireless, sensor, mobile, overlay, and
peer-to-peer (P2P) networks are inherently dynamic,
bandwidth-constrained, and unreliable. This necessitates the
development of a solid theoretical foundation to design efficient,
robust, and scalable distributed algorithms and understand the power
and limitations of distributed computation on such networks. Such a
foundation is critical to realize the full potential of these
large-scale dynamic networks.

In this paper, we study a fundamental problem of information
spreading, called {\em $k$-gossip}, on dynamic networks.  This problem
was analyzed for static networks by Topkis~\cite{topkis:disseminate},
and was first studied on dynamic networks by Kuhn, Lynch, and
Oshman~\cite{kuhn+lo:dynamic}.  In $k$-gossip (also referred to as
{\em $k$-token dissemination}), there are $k$ distinct pieces of
information (tokens) that are initially present in some nodes and the
problem is to disseminate all the $k$ tokens to all the $n$ nodes in
the network, under the bandwidth constraint that one token can go
through an edge per round, under a synchronous model of communication.
This problem is a fundamental primitive for distributed computing;
indeed, solving $n$-gossip, where each node starts with exactly one
token, allows any function of the initial states of the nodes to be
computed, assuming the nodes know $n$~\cite{kuhn+lo:dynamic}.

\junk{Indeed, determining the time complexity of gossip is central to
  understanding the power of distributed computation in dynamic
  networks as well to understanding the fundamental algorithmic
  limitations and capabilities of various models of dynamic networks.}

The dynamic network models that we consider in this paper allow an
adversary to choose an arbitrary set of communication links among the
nodes for each round, with the only constraint being that the
resulting communication graph is connected in each round. Our
adversarial models are either the same as or closely related to those
adopted in recent
studies~\cite{avin+kl:dynamic,kuhn+lo:dynamic,odell+w:dynamic,santoro}.

\junk{ Our dynamic models subsume several well-studied models --- in
  particular, stochastic evolutionary models (e.g., see
  \cite{Markovian} and the references therein) --- in distributed
  computing and networking. }

The focus of this paper is on the power of {\em token-forwarding}\/
algorithms, which do not manipulate tokens in any way other than
storing, copying, and forwarding them.  Token-forwarding algorithms
are simple and easy to implement, typically incur low overhead, and
have been widely studied (e.g, see~\cite{leightonbook,pelegbook}).  In
any $n$-node static network, a simple token-forwarding algorithm that
pipelines token transmissions up a rooted spanning tree, and then
broadcasts them down the tree completes $k$-gossip in $O(n + k)$
rounds~\cite{topkis:disseminate, pelegbook}, which is tight since
$\Omega(n+k)$ rounds is a straightforward lower bound due to bandwidth
constraints.  The central question motivating our study is whether a
linear or near-linear bound is achievable for $k$-gossip on dynamic
networks.

\junk{Unlike prior models on dynamic networks, this
model does not assume that the network eventually stops changing and
requires that the algorithms work correctly and terminate even in
networks that change continually over time.}

\junk{and nodes do not know their neighbors for the current round before
they broadcast their messages. (Note that in this model, only edges
change and nodes are assumed to be fixed.)
(By just gossip, we mean $n$-gossip.)  }

\subsection{Our results}
Our first result, in Section~\ref{sec:lower}, is a lower bound for
$k$-gossip under a worst-case model due to~\cite{kuhn+lo:dynamic},
which we call the {\em strongly adaptive adversary}\/ model. We now
define the model and then state the theorem.

\begin{definition}[Strongly adaptive adv.]
\label{def:strong} 
In each round of the {\em strongly adaptive adversary}\/ model, each
node first chooses a token to broadcast to all its neighbors (without
knowing who they are), and then the adversary chooses an arbitrary
connected communication network for that round with the knowledge of
the tokens chosen by each node.
\end{definition}

We note that the choice made by each node may depend arbitrarily on
the tokens held by that and other nodes.  Hence this model allows for
both distributed and centralized algorithms.

\begin{theoremR}[\tAlgLower]
\label{thm:alg+lower} (a) Any randomized token-forwarding algorithm 
(centralized or distributed) for $k$-gossip needs $\Omega(nk/\log n +
n)$ rounds in the strongly adaptive adversary model starting from any
initial token distribution in which each of $k \le n$ tokens is held
by exactly one node. (b) In addition, the same bound holds with high
probability over an initial token distribution where each of the $n$
nodes receives each of $k \le n$ tokens independently with probability
$3/4$.
\end{theoremR}

This result resolves an open problem raised in~\cite{kuhn+lo:dynamic},
improving their lower bound of $\Omega(n \log n)$ for $k = \omega(\log
n \log \log n)$, and matching their upper bound to within a
logarithmic factor.  Our lower bound also enables a better comparison
of token-forwarding with an alternative approach based on network
coding due to ~\cite{haeupler:gossip,haeupler+k:dynamic}. Assuming the
size of each message is bounded by the size of a token, network coding
completes $k$-gossip in $O(nk/\log n + n)$ rounds for $O(\log n)$-bit
tokens, and $O(n + k)$ rounds for $\Omega(n \log n)$ bit tokens.
Thus, for large token sizes, our result {\em establishes a factor
  $\Omega(\min\{n,k\}/\log n)$ gap between token-forwarding and
  network coding}, a significant new bound on the network coding
advantage for information dissemination.\footnote{The strongly
  adaptive adversary model allows each node to broadcast one token in
  each round, and thus our bounds hold regardless of the token size.}
Furthermore, for small token and message sizes (e.g., $O(\polylog(n))$
bits), we do not know of any algorithm (network coding, or otherwise)
that completes $k$-gossip against a strongly adaptive adversary in
$o(nk/\polylog(n))$ rounds.

\junk{In a key
result,~\cite{kuhn+lo:dynamic} showed that in their adversarial model,
$k$-gossip can be solved by token-forwarding in $O(nk)$ rounds, and
any deterministic online token-forwarding algorithm needs $\Omega(n
\log k)$ rounds. They also proved an $\Omega(nk)$ lower bound for a
restricted class of token-forwarding algorithms, called
knowledge-based algorithms.  Our main result is a new lower bound that
applies to {\em any}\/ token-forwarding algorithm for $k$-gossip.

We show that every token-forwarding algorithm for the $k$-gossip
problem takes $\Omega(n + nk/\log n)$ rounds against an adversary
that, at the start of each round, knows the randomness used by the
algorithm in the round.  This also implies that any deterministic
online token-forwarding algorithm takes $\Omega(n + nk/\log n)$
rounds.  Our result applies even to centralized token-forwarding
algorithms that have a global knowledge of the token distribution.}

\smallskip

Our lower bound for the strongly adaptive adversary model
motivates us to study models which restrict the power of
the adversary and/or strengthen the capabilities of the
algorithm. We would like to restrict the adversary power
as little as possible and yet design fast algorithms.

\begin{definition}[Weakly adaptive adv.]
\label{def:weak}
In each round of the {\em weakly adaptive adversary}\/ model, the
adversary is required to lay down the communication network first,
before the nodes can communicate. Hence nodes get to know their
neighbors and thus each node can send a possibly distinct token to
each of its neighbors.  Note that the adversary still has full control
of the topology in each round.
\end{definition}

We propose a simple protocol which we call the {\em symmetric difference}
(\symdiff) protocol.

\begin{definition}[\symdiff~protocol]
The protocol \symdiff~works as follows: in each round, independently
along every edge $(u,v)$, sample a token $t$ uniformly at random from
the symmetric difference (i.e., XOR) of the sets of tokens held by
node $u$ and node $v$ at the start of the round. Then the node that
holds $t$ sends it to the other node.
\end{definition}

Our second main result, in Section~\ref{sec:sym_diff_analysis}, shows
that in the weakly adaptive model, the \symdiff~protocol beats the
lower bound for mixed starting distribution of Theorem
\ref{thm:alg+lower}.

\begin{theoremR}[\trandsymdiff]
\label{thm:rand_sym_diff} Starting from any well-mixed distribution of tokens
where each of the $n$ nodes has each of the $k$ tokens independently
with a positive constant probability, the \symdiff~protocol completes
$k$-gossip in $O((n+k) \log n \log k)$ rounds with high
probability. The probability is both over the initial assignment of
tokens and the randomness of the protocol.
\end{theoremR}

\junk{We note that for this conjecture to be
true, randomization in the \symdiff~protocol is essential. Indeed, if the token
sent along an edge is selected deterministically from the symmetric difference,
as opposed to being selected at random, then we can exhibit a distribution on
$k$ tokens starting from which this protocol takes $\Omega(nk)$ rounds (see
Theorem \ref{thm:det_sym_diff}).}

A communication-efficient implementation of \symdiff~hinges on the
communication complexity of sampling a uniform element from the
symmetric difference of two sets. As another technical contribution,
we give an explicit, communication-efficient protocol for this task in
Section~\ref{sec:sym_diff_sampling}.

\begin{theoremR}[\tccSample]
\label{thm:sym_diff_sampling} Let Alice and Bob have two subsets $A \subseteq
[k]$ and $B \subseteq [k]$ respectively. There is an explicit, private-coin
protocol to sample a random element from the symmetric difference of the two
sets, $A \oplus B := (A \setminus B) \cup (B \setminus A)$, such that the
sampled distribution is statistically $\epsilon$-close to the uniform
distribution on $A \oplus B$ and the protocol uses $O(\log^{3/2} (k/\epsilon))$
bits of communication.
\end{theoremR}

A recent improvement on pseudorandom generators for combinatorial
rectangles~\cite{GMRTV12} implies an improvement in the communication in
Theorem \ref{thm:sym_diff_sampling} to $\tilde O(\lg k/\eps)$.
We also note that for \symdiff~to be
communication-efficient it is important that we work with symmetric
difference as opposed to set difference, which might have looked a
natural choice.  This is because Theorem~\ref{thm:sym_diff_sampling}
becomes false if we replace symmetric difference $A \oplus B$ with set
difference $A \setminus B$. For the latter, communication $\Omega(k)$
is required, due to the lower bounds for
disjointness~\cite{KalyanasundaramS92,Razborov92}.



\junk{ We propose \symdiff, a simple randomized distributed algorithm where in
each round, along every edge $(u,v)$, a token selected uniformly at random from
the symmetric difference of the sets of tokens held by node $u$ and node $v$ is
exchanged.
\begin{theoremR}[\tSymDiffProtocol]
\label{thm:rand_sym_diff}
In the weakly adaptive adversary model, from an initial distribution
of tokens where each node has each token with probability
$\frac{3}{4}$ independently, randomized \symdiff\ completes $k$-gossip
in $O(n \log n \log k)$ rounds with high probability.
\end{theoremR}

A critical component in \symdiff\ is the uniform selection of a token
from the symmetric difference of the sets of tokens held by two
neighboring nodes.  We show that that the uniform selection problem
can be solved with $O(\log^{1.5} n)$ bits of communication, making the
overall algorithm communication-efficient.

\begin{theoremR}[\tSymDiffSampling]
\label{thm:sym_diff_sampling}
Let Alice and Bob have two subsets $A \subseteq [n]$ and $B \subseteq
[n]$ respectively. There is an explicit protocol to sample a random
element from the symmetric difference of the two sets, $(A \setminus
B) \cup (B \setminus A)$, such that the sampled distribution is
statistically $\epsilon$-close to the uniform distribution on the
symmetric difference and the protocol uses $O(\log^{\frac{3}{2}} n +
\log^{\frac{3}{2}} (\frac{1}{\epsilon}))$ bits of communication. The
protocol outputs $\perp$ if the symmetric difference is empty.
\end{theoremR}
}

\junk{ We prove that starting from any {\em well-mixed}\/ distribution of
tokens, this algorithm solves $k$-gossip against a weakly adaptive adversary in
$O((n+k)\log n)$ rounds with high probability. We then show how the above
uniform selection problem can be solved with $O(\log^{1.5} n)$ bits of
communication, making the overall algorithm communication-efficient. }


\smallskip
Although we have only been able to establish the efficiency of the
\symdiff~protocol starting from well-mixed distributions as in Theorem
\ref{thm:rand_sym_diff}, we conjecture that in fact \symdiff~is
efficient starting from any token distribution.  A priori, however, it
is unclear if there is {\em any}\/ token-forwarding algorithm that
solves $k$-gossip in $\tilde{O}(n+k)$ rounds even in an offline
setting, in which the network can change arbitrarily each round, but
the entire evolution is known to the algorithm in advance.  Our next
result, in Section~\ref{sec:multiport}, resolves this problem.

\begin{definition}[Offline algorithm]
\label{def:offline}
An {\em offline algorithm}\/ for $k$-gossip takes as input an initial
token distribution and a sequence of $nk$ graphs $G_1$, \ldots,
$G_{nk}$, where $G_t$ represents the communication network in round
$t$.  The output of the algorithm is a schedule that specifies, for
each $t$, each edge $e$ of $G_t$, a token (if any) sent along $e$ in
each direction.  The {\em length}\/ of the schedule is the largest $t$
for which a token is sent on any edge in round $t$.
\end{definition}

\begin{theoremR}[\tOfflineMultiport]
\label{thm:offline_multiport}
There is a polynomial-time randomized offline algorithm that returns,
for every $k$-gossip instance, a schedule of length $O((n+k)\log^2 n)$
with high probability.
\end{theoremR}

\junk{
\begin{itemize}
\item
We present a polynomial-time algorithm that given any $k$-gossip
instance on an $n$-node dynamic network, computes an offline schedule
for solving the instance in $O((n+k)\log^2 n)$ rounds, with high
probability.
\end{itemize}
}

Like \symdiff, the schedule returned by the above offline algorithm
allows each node to send a possibly distinct token to each of its
neighbors in each round. However, in some applications, e.g., wireless
networks, the preferred mode of communication is broadcast.  Hence, we
also consider offline broadcast schedules where each node can only
broadcast a single token to all of its neighbors in each round and
show the following result in Section~\ref{sec:upper}.

\begin{theoremR}[\tOfflineBroadcast]
\label{thm:offline_broadcast}
There is a polynomial-time randomized offline algorithm that returns,
for every $k$-gossip instance, a broadcast schedule of length $O(n
\min\{k, \sqrt{k \log n}\})$, with high probability.
\end{theoremR}

\junk{
\item
We present a polynomial-time algorithm that given any $k$-gossip
instance on an $n$-node dynamic network, computes an offline broadcast
schedule for solving the instance in $O(n \min\{k, \sqrt{k \log n}\})$
rounds, with high probability.
\end{itemize}
}
\junk{
We present a polynomial-time centralized algorithm that solves the
$k$-gossip problem in the offline setting of an $n$-node dynamic
network in $O(\min\{nk, n \sqrt{k \log n}\})$ rounds with high
probability.  We also present a polynomial-time centralized
token-forwarding algorithm that solves the $k$-gossip problem in the
offline setting in $O(n^\eps)$ times the optimal number of rounds, for
any $\eps > 0$, assuming the algorithm is allowed to transmit $O(\log
n)$ tokens per round.

Our upper bounds show that in the offline setting, token-forwarding
algorithms can achieve a time bound that is within $O(\sqrt{k\log n})$
of the information-theoretic lower bound of $\Omega(n + k)$, and that
we can approximate the best token-forwarding algorithm to within a
$O(n^\eps)$ factor, with logarithmic extra bandwidth per edge.
}

\subsection{Related work}
Information spreading (or dissemination) in networks is a fundamental
problem in distributed computing and has a rich literature. The
problem is generally well-understood on static networks, both for
interconnection networks~\cite{leighton:book} as well as general
networks~\cite{lynch:distributed,pelegbook,attiya+w:distributed}.  In
particular, the $k$-gossip problem can be solved in $O(n + k)$ rounds
on any $n$-node static network~\cite{topkis:disseminate}.  There also
have been several papers on broadcasting, multicasting, and related
problems in static heterogeneous and wireless networks (e.g.,
see~\cite{alon+blp:radio,bar-yehuda+gi:radio,bar-noy+gns:multicast,clementi+mps:radio}).

Dynamic networks have been studied extensively over the past three
decades.  Early studies focused on dynamics that arise when edges or
nodes fail.  A number of fault models, varying according to extent and
nature (e.g., probabilistic vs.\ worst-case) of faults allowed, and
the resulting dynamic networks have been analyzed (e.g.,
see~\cite{attiya+w:distributed,lynch:distributed}).  There have been
several studies that constrain the rate at which changes occur, or
assume that the network eventually stabilizes (e.g.,
see~\cite{afek+ag:dynamic,dolev:stabilize,gafni+b:link-reversal}).

There also has been considerable work on general dynamic networks.
Early studies in this area
include~\cite{afek+gr:slide,awerbuch+pps:dynamic}, which introduce
building blocks for communication protocols on dynamic networks.
Another notable work is the local balancing approach
of~\cite{awerbuch+l:flow} for solving routing and multicommodity flow
problems on dynamic networks, which has also been applied to
multicast, anycast, and broadcast problems on mobile ad hoc
networks~\cite{awerbuch+bbs:route,awerbuch+bs:anycast,jia+rs:adhoc}.
To address highly unpredictable network dynamics, stronger adversarial
models have been studied
by~\cite{avin+kl:dynamic,odell+w:dynamic,kuhn+lo:dynamic} and others;
see the recent survey of \cite{santoro} and the references therein.
Unlike prior models on dynamic networks, these models and ours do not
assume that the network eventually stops changing; the algorithms are
required to work correctly and terminate even in networks that change
continually over time.  The recent work of \cite{clementi-podc12},
studies the flooding time of {\em Markovian} evolving dynamic graphs,
a special class of evolving graphs.
 \junk{The model of~\cite{kuhn+lo:dynamic} allows for a much stronger
   adversary than the ones considered in past
   work~\cite{awerbuch+l:flow,awerbuch+bbs:route,awerbuch+bs:anycast}.
   There also have been other prior models for dynamic networks
   similar in spirit to the model of , In addition to the $k$-gossip
   problem,~\cite{kuhn+lo:dynamic} considers the related problem of
   counting, and generalizes its results to the $T$-interval
   connectivity model, which includes the constraint that any interval
   of $T$ rounds has a stable connected spanning subgraph. } The
 survey of~\cite{kuhn-survey} summarizes recent work on dynamic
 networks.  We also note that our model and the ones we have discussed
 thus far only allow edge changes from round to round; the recent work
 of \cite{p2p-soda} studies a dynamic network model where both nodes
 and edges can change in each round.  \junk{They show that stable
   almost-everywhere agreement can be efficiently solved in such
   networks even in adversarial dynamic settings. }

\junk{ .  Local balancing algorithms, which continually balance the
  packet queues across each edge of the network and drain packets at
  their destination,

It has been shown that assuming the queues at the nodes can hold
enough packets, the local balancing approach can achieve throughput
that is arbitrarily close to the optimal achievable by any offline
algorithm.

Modeling general dynamic networks has gained renewed attention with
the recent advent of heterogeneous networks composed out of ad hoc,
and mobile devices.  

The work of~\cite{avin+kl:dynamic} studied the {\em cover time} of random walks
in a dynamic network controlled by an adversary that is oblivious to
the random choices made by the nodes (this is a weaker model than the
adaptive models considered in this paper). 
}

 \junk{As in the Kuhn
  et al. model, the algorithms in \cite{p2p-soda} will work and
  terminate correctly even when the network keeps continually
  changing.  We note that there has been considerable prior work in
  dynamic P2P networks (see \cite{p2p-soda, p2p-focs} and the
  references therein) but these don't assume that the network keeps
  continually changing over time.}

Recent work of~\cite{haeupler:gossip,haeupler+k:dynamic} presents
information spreading algorithms based on network
coding~\cite{ahlswede+cly:coding}.  As mentioned earlier, one of their
important results is that the $k$-gossip problem on the adversarial
model of~\cite{kuhn+lo:dynamic} can be solved using network coding in
$O(n+k)$ rounds assuming the token sizes are sufficiently large
($\Omega(n\log n)$ bits). For further references to using network
coding for gossip and related problems, we refer to
~\cite{haeupler:gossip,haeupler+k:dynamic,avin1,avin2,deb+mc:coding,shah}
and the references therein.

As we show in Section~\ref{sec:upper}, the problem of finding an
optimal broadcast schedule in the offline setting reduces to the
Steiner tree packing problem for directed
graphs~\cite{cheriyan+s:steiner}.  This problem is closely related to
the directed Steiner tree problem (a major open problem in
approximation
algorithms)~\cite{charikar+ccdgg:steiner,zosin+k:steiner} and the gap
between network coding and flow-based solutions for multicast in
arbitrary directed networks~\cite{agarwal+c:coding,sanders+et:flow}.

Finally, we note that a number of recent studies solve $k$-gossip and
related problems using {\em gossip-based}\/ processes, in which each
node exchanges information with a small number of randomly chosen
neighbors in each round,
e.g., see ~\cite{berenbrink+ceg:gossip,demers,kempe1,chen-spaa,karp,shah,boyd}
and the references therein.  All these studies assume a static
communication network, and do not apply directly to the models
considered in this paper.

\section{Lower bound for the strongly adaptive adversary model}
\label{sec:lower} 

In this section, we prove Theorem \ref{thm:alg+lower}. 
We first define the adversary used in the proof of Theorem
\ref{thm:alg+lower}. 


\smallskip
\noindent {\em Adversary:} The strategy of the adversary is simple.
We use the notion of {\em free edge}\/ introduced
in~\cite{kuhn+lo:dynamic}.  In a given round $r$, we call an edge
$(u,v)$ \emph{free} \junk{
\footnote{The notion of a free edge is borrowed from
  \cite{kuhn+lo:dynamic}.  Apart from this notion, our proof technique
  is different from that of \cite{kuhn+lo:dynamic} where only a an
  $\Omega(n\log n)$ lower bound is shown.}} if at the start of the
round, $u$ has the token that $v$ broadcasts in the round and $v$ has
the token that $u$ broadcasts in the round; an edge that is not free
is called {\em non-free}.
Thus, if $(u,v)$ is a free edge in a particular round, neither $u$ nor
$v$ can gain any new token through this edge in the round. Since we
are considering a strong adversary model, at the start of each round,
the adversary knows for each node $v$, the token
that $v$ will broadcast in that round.  In round $r$, the adversary
constructs the communication graph $G_r$ as follows. First, the
adversary adds all the free edges to $G_r$. Let $C_1,C_2,\dots,C_l$
denote the connected components thus formed. The adversary then
guarantees the connectivity of the graph by selecting an arbitrary
node in each connected component and connecting them in a line. 
Figure\ref{fig:adversary} illustrates the construction.

The network $G_r$ thus constructed has exactly $l - 1$ non-free edges,
where $l$ is the number of connected components formed by the free
edges of $G_r$.  If $(u,v)$ is a non-free edge in $G_r$, then $u$, $v$
will gain at most one new token each through $(u,v)$. We refer to this
exchange on a non-free edge as a {\em useful token exchange}.

\smallskip
Our proof proceeds as follows. First, we show that with high
probability over the initial assignment of tokens, in every round
there are at most $O(\lg n)$ useful token exchanges. Then we note
that, again with high probability over the initial assignment of
tokens, overall $\Omega(nk)$ useful token exchanges must occur for the
protocol to complete.

\smallskip


\begin{definition}
We say that a sequence of nodes $v_1,v_2,\ldots,v_k$ is {\em
  half-empty}\/ in round $r$ with respect to a sequence of tokens
$t_1,t_2,\ldots,t_k$ if the following condition holds at the start of
round $r$: for all $1 \le i,j \le k$, $i \neq j$, either $v_i$ is
missing $t_j$ or $v_j$ is missing $t_i$.  We then say that $\langle
v_i \rangle$ is half-empty with respect to $\langle t_i \rangle$ and
refer to the pair $(\langle v_i \rangle, \langle t_i \rangle)$ as a
half-empty configuration of size $k$.
\end{definition}

\vspace{-5pt}
\begin{figure*}[ht]
\begin{center}
\includegraphics[width=4in]{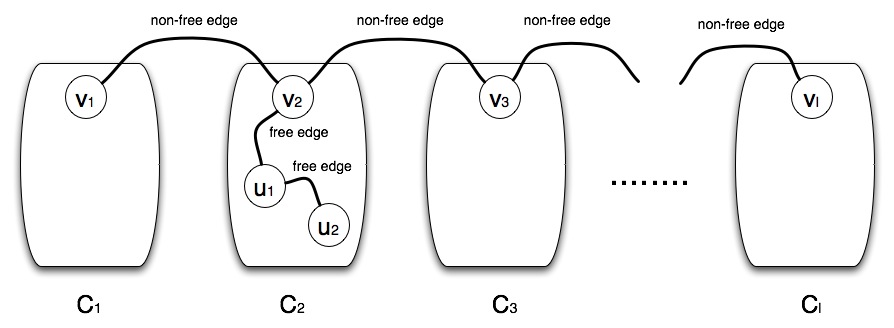}
\caption{\small{The network constructed by the adversary in a particular
  round.  Note that if node $v_i$ broadcasts token $t_i$, then the
  $\langle v_i \rangle$ forms a half-empty configuration with respect
  to $\langle t_i \rangle$ at the start of this round.}}
\label{fig:adversary}
\end{center}
\vspace{-1cm}
\end{figure*}

\junk{To prove the lower bound, we consider a special starting state, where
each node $u$ has token $t_i$ with probability $3/4$ for all $i$. Then
we look for certain structure in this starting state, and argue that
with such structure the number of useful token exchanges is $O(\log
n)$ with high probability. Furthermore, we argue that such structure
will always exist in the following rounds of communication. That
means, the number of useful token exchanges will be $O(\log n)$ in
every round. Since each node $u$ has any token with probability $3/4$
at the starting point, the expected number of missing token is $n\cdot
k/4 = \Omega(kn)$. In fact, we argue the number of missing token is
$\Omega(kn)$ with high probability using Chernoff bound. Thus, this
will imply any centralized deterministic algorithm runs in
$\Omega(kn/\log n)$ rounds.

First, we draw the connection between the number of useful token
exchanges and the existence of certain structure in Lemma
\ref{lem:free+edge}. Then in Lemma \ref{lem:alg+lower} we show the
$\Omega(n + nk/\log n)$ lower bound if the token dissemination process
starts from the state where each node $u$ has token $t_i$ with
probability $3/4$ for all $i$. Lastly, in Theorem
\ref{thm:alg+lower.single} we prove our lower bound.
\begin{lemma}
\label{lem:free+edge}
If $m$ or more useful token exchanges occur, then there exist $m/2+1$
nodes $v_1,v_2,\dots,v_{m/2+1}$ and $m/2+1$ tokens
$t_1,t_2,\dots,t_{m/2+1}$ ($v_i$ broadcasts token $t_i$) such that,
for all $i<j$, either $v_i$ is missing $t_j$ or $v_j$ is missing
$t_i$.
\end{lemma}
}

\begin{lemma}
\label{lem:free+edge}
If $m$ useful token exchanges occur in round $r$, then there exists a
half-empty configuration of size at least $m/2 + 1$ at the start of
round $r$.
\end{lemma}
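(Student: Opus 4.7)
The plan is to exploit the structure of the adversary's graph $G_r$ directly: all non-free edges lie on the single line that the adversary uses to connect the free-edge components, so by counting components we will read off a half-empty configuration.

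First I would observe that a useful token exchange can only occur along a non-free edge, and along any non-free edge $(u,v)$ the two endpoints together receive at most two new tokens, one for each direction. Hence if $m$ useful token exchanges take place in round $r$, then $G_r$ contains at least $m/2$ non-free edges. By the adversary's construction, the non-free edges are exactly the $l-1$ line edges linking the representatives of the $l$ connected components $C_1,\ldots,C_l$ of the free-edge graph. Therefore $l-1\ge m/2$, i.e., $l\ge m/2+1$.

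Next I would produce the half-empty configuration. For each component $C_i$ pick an arbitrary node $v_i\in C_i$, and let $t_i$ denote the token that $v_i$ broadcasts in round $r$ (this broadcast choice is fixed at the start of the round and is what the adversary uses to decide which pairs are free). For $i\ne j$, the nodes $v_i$ and $v_j$ lie in distinct components of the free-edge graph, so in particular the pair $(v_i,v_j)$ is \emph{not} free. Unfolding the definition of ``free'' for this pair, it is not the case that both $v_i$ holds $t_j$ and $v_j$ holds $t_i$ at the start of the round; equivalently, $v_i$ is missing $t_j$ or $v_j$ is missing $t_i$. This is exactly the half-empty condition, so $(\langle v_i\rangle,\langle t_i\rangle)$ is a half-empty configuration of size $l\ge m/2+1$.

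There is no real obstacle beyond keeping the bookkeeping straight: the key conceptual point is that ``free'' is a property of arbitrary node pairs (not just edges the adversary chose to include), so two representatives from different free-components are automatically a non-free pair, which is precisely what the half-empty definition requires. The factor of $2$ between $m$ useful exchanges and the $m/2+1$ configuration size comes from the two directions of each non-free edge plus the $+1$ coming from $l$ versus $l-1$.
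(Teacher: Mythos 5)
Your proof is correct and takes essentially the same route as the paper's: count that $m$ useful exchanges force at least $m/2$ non-free edges, hence at least $m/2+1$ free-edge components, and then observe that one representative per component together with its broadcast token satisfies the half-empty condition because any cross-component pair is automatically non-free. The only cosmetic difference is that you go through the identity "non-free edges = the $l-1$ line edges" to get $l \ge m/2+1$, whereas the paper states the component bound directly; both are the same counting argument.
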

\begin{proof}
Consider the network $G_r$ in round $r$.  Each non-free edge can
contribute at most 2 useful token exchanges.  Thus, there are at least
$m/2$ non-free edges.  Based on the adversary we consider, no useful
token exchange takes place within the connected components induced by
the free edges. Useful token exchanges can only happen over the
non-free edges between connected components. This implies there are at
least $m/2+1$ connected components in the subgraph of $G_r$ induced by
the free edges.  Let $v_i$ denote an arbitrary node in the $i$th
connected component in this subgraph, and let $t_i$ be the token
broadcast by $v_i$ in round $r$.  For $i \neq j$, since $v_i$ and
$v_j$ are in different connected components, $(v_i,v_j)$ is a non-free
edge in round $r$; hence, at the start of round $r$, either $v_i$ is
missing $t_j$ or $v_j$ is missing $t_i$.  Thus, the sequence $\langle
v_i \rangle$ of nodes of size at least $m/2 + 1$ is half-empty with
respect to the sequence $\langle t_i \rangle$ at the start of round
$r$.
\end{proof}

An important point to note about the definition of a half-empty
configuration is that, in a given round, it only depends on the tokens
held by the nodes; it is independent of the tokens that the nodes
broadcast.  This allows us to prove the following easy lemma that
shows a monotonicity property of half-empty configurations.
\begin{lemma}[Monotonicity Property]
\label{lem:monotone}
If a sequence $\langle v_i \rangle$ of nodes is half-empty with
respect to $\langle t_i \rangle$ at the start of round $r$, then
$\langle v_i \rangle$ is half-empty with respect to $\langle t_i
\rangle$ at the start of round $r'$ for any $r' \le r$.  Hence, the
size of the largest half-empty configuration cannot increase with the
increase in the number of rounds.
\end{lemma}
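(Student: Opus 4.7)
The plan is to leverage the fact that token-forwarding algorithms are monotone: no node ever loses a token, so the set of tokens held by each node is non-decreasing over time. Contrapositively, if a node is missing a token at some round $r$, it was also missing that token at every earlier round $r' \le r$.

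First I would fix a half-empty configuration $(\langle v_i \rangle, \langle t_i \rangle)$ at round $r$ and an arbitrary pair $i \neq j$. By the half-empty hypothesis, either $v_i$ is missing $t_j$ at the start of round $r$, or $v_j$ is missing $t_i$ at the start of round $r$. In the first case, monotonicity of token holdings implies $v_i$ was also missing $t_j$ at the start of any $r' \le r$; in the second case, analogously, $v_j$ was missing $t_i$ at the start of every such $r'$. In either case, the required disjunction holds at the start of round $r'$. Since $i,j$ were arbitrary, the configuration is half-empty at the start of round $r'$.

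For the second sentence of the lemma, I would then observe that any half-empty configuration of size $s$ present at round $r$ is, by what we just proved, also a half-empty configuration of size $s$ at round $r-1$. Hence the maximum size of a half-empty configuration at round $r-1$ is at least as large as at round $r$, i.e., the largest size is a non-increasing function of time.

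The only subtlety worth flagging is that the argument runs \emph{backward} in time: the property is preserved when going from $r$ to $r' \le r$, not forward, since a node could of course acquire a missing token in a later round. The author's remark that the definition of half-empty depends only on token holdings (and not on which token each node happens to broadcast) is exactly what allows this backward-monotonicity argument to go through without any reference to the adversary's or algorithm's choices. No real obstacle arises.
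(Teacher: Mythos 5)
Your proof is correct and follows essentially the same approach as the paper: the key observation in both is that token holdings only grow over time, so a node missing a token at round $r$ was also missing it at every earlier round, and the half-empty property (which depends only on holdings) is therefore preserved going backward in time. Your write-up simply spells out the pairwise disjunction argument that the paper leaves implicit.
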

\begin{proof}
The lemma follows by noting that if a node $v_i$ is missing a token
$t_j$ at the start of round $r$, then $v_i$ is missing token $t_j$ at
the start of every round $r' < r$.
\end{proof}

Lemmas~\ref{lem:free+edge} and~\ref{lem:monotone} suggest that if we
can identify a token distribution in which all half-empty
configurations are small, we can guarantee small progress in each
round.  We now show that a well-mixed distribution satisfies the
desired property, establishing part (b) of the theorem. \junk{Note
  that the "high probability" bound in the following theorem is with
  respect to the random choices made in the initial token placement,
  and thus shows, by the probabilistic method, the lower bound holds
  for most of the initial token distributions.}

\begin{proof}[Proof of Theorem \ref{thm:alg+lower}(b)]
We first note that if the number of tokens $k$ is less
than $100 \log n$, then the $\Omega(n + nk/\log n)$ lower
bound is trivially true because even to disseminate one
token on a line it takes $\Omega(n)$ rounds\footnote{ The choice of
  the constant 100 here is arbitrary; we have not optimized the choice
  of constants in the proof.}. Thus, in the
following proof, we focus on the case where $k \ge 100
\log n$.

Let $E_l$ denote the event that there exists a half-empty
configuration of size $l$ at the start of the first round.  For $E_l$
to hold, we need $l$ nodes $v_1, v_2, \dots, v_l$ and $l$ tokens
$t_1, t_2, \dots, t_l$ such that for all $i \neq j$ either $v_i$ is
missing $t_j$ or $v_j$ is missing $t_i$. For a pair of nodes $u$ and
$v$, by union bound, the probability that $u$ is missing $t_v$ or $v$
is missing $t_u$ is at most $1/4+1/4 = 1/2$. Thus, the probability of
$E_l$ can be bounded as follows.
\[ \prob{E_l} \le {n \choose l} \cdot \frac{k!}{(k-l)!} \cdot \rb{\frac{1}{2}}^{l \choose 2} \le  n^l \cdot k^l \frac{1}{2^{l(l-1)/2}} \le \frac{2^{2l\log n}}{2^{l(l-1)/2}}.\]
In the above inequality, ${n \choose l}$ is the number of ways of
choosing the $l$ nodes that form the half-empty configuration,
$k!/(k-l)!$ is the number of ways of assigning $l$ distinct tokens,
and $(1/2)^{{l \choose 2}}$ is the upper bound on the probability for
each pair $i \neq j$ that either $v_i$ is missing $t_j$ or $v_j$ is
missing $t_i$.  For $l \geq 5 \log n$, $\prob{E_l} \le 1/n^2$.  Thus,
the largest half-empty configuration at the start of the first round,
and hence at the start of {\em any} round (by Lemma
\ref{lem:monotone}), is of size at most $5 \log n$ with probability at
least $1 - 1/n^2$.  By Lemma~\ref{lem:free+edge}, we thus obtain that
the number of useful token exchanges in each round is at most $10 \log
n$, with probability at least $1 - 1/n^2$.

\junk{Now we argue, in each following rounds, the number of useful token
exchanges is also no more than $2(l-1)$ with high probability, where $l\ge
5\log n$. If there are $2(l-1)$ or more useful token exchanges in
round $r$ where $r>1$, then by Lemma \ref{lem:free+edge} there exist
$l$ nodes $v_1,v_2,\dots,v_l$ and $l$ tokens $t_1,t_2,\dots,t_l$ such
that for all $i \neq j$ either $v_i$ is missing $t_j$ or $v_j$ is
missing $t_i$. Then at the beginning of round 1, the condition that
for all $i \neq j$ either $v_i$ is missing $t_j$ or $v_j$ is missing
$t_i$ still holds, and this can happen only with probability $1/n^2$.}

Let $M_i$ be the number of tokens missing at node $i$ in the initial
distribution. Then $M_i$ is a binomial random variable with
$\expect{M_i} = k/4$.  By a Chernoff bound, the probability that node
$i$ misses at most $k/8$ tokens is
\[ \prob{M_i \le \frac{k}{8}} = \prob{M_i \le \rb{1 - \frac{1}{2}} \cdot \expect{M_i}} \le e^{-\frac{\expect{M_i}\rb{\frac{1}{2}}^2}{2}} = e^{-\frac{k}{32}}.\]
Thus, the total number of tokens missing in the initial
distribution is at least $n \cdot k/8 = \Omega(kn)$ with
probability at least $1 - n/e^{\frac{k}{32}} \ge 1 -
1/n^2$ ($k \ge 100 \log n$).  Since the number of useful
tokens exchanged in each round is at most $10 \log n$,
the number of rounds needed to complete $k$-gossip is
$\Omega(kn /\log n)$ with high probability. 
\end{proof}

Part (b) of Theorem~\ref{thm:alg+lower} does not apply to some natural
initial distributions, such as one in which each token resides at
exactly one node.  When starting from a distribution in this class,
though there are far fewer tokens distributed initially, the argument
above does not rule out the possibility that an algorithm avoids the
problematic configurations that arise in the proof.  Part (a) of
Theorem~\ref{thm:alg+lower} extends the lower bound to this class of
distributions. The main idea of the proof is showing that a reduction
exists (via the probabilistic method) to an initial well-mixed
distribution of Theorem~\ref{thm:alg+lower}.

\junk{\begin{theorem}
\label{thm:lower.single}
Starting from any distribution in which each token starts at exactly one node,
any online token-forwarding algorithm for $k$-gossip needs $\Omega(n +
nk/\log n)$ rounds against a strong adversary.
\end{theorem}
}

\begin{lemma}
\label{lem:alg+lower.single}
From any distribution in which each token starts at exactly one node
and no node has more than one token, any online token-forwarding
algorithm for $k$-gossip needs $\Omega(kn/\log n)$ rounds against a
strong adversary.
\end{lemma}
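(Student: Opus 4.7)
The plan is to reduce to Theorem~\ref{thm:alg+lower}(b) via a simulation argument. Fix any single-holder initial distribution $S_0$ in which token $t$ resides at a distinct node $v_t$, and suppose toward a contradiction that some randomized token-forwarding algorithm $A$ completes $k$-gossip from $S_0$ in $T = o(kn/\log n)$ rounds against every strongly adaptive adversary with high probability. Using $A$, I will construct an algorithm $B$ that completes $k$-gossip in $T$ rounds on a mildly perturbed well-mixed distribution $\tilde D$, and then verify that the lower-bound argument of Theorem~\ref{thm:alg+lower}(b) still applies to $\tilde D$.

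Let $\tilde D$ be obtained by planting $S_0$ deterministically (each $v_t$ holds $t$ with probability $1$) and, independently, adding every other (node, token) pair with probability $3/4$. Starting from $\tilde D$, algorithm $B$ simulates $A$'s prescribed transmissions as if the initial state were $S_0$, simply ignoring the extra tokens. This simulation is well-defined: whenever $A$ directs a node $u$ to broadcast a token $t$, either $u = v_t$ (so $u$ already holds $t$ in $\tilde D$) or $u$ received $t$ through an earlier simulated round, which was executed identically in $B$. Hence after $T$ rounds every node holds every token in $A$'s simulated view, and therefore in $B$'s actual view too, so $B$ completes $k$-gossip from $\tilde D$ in $T$ rounds against every strongly adaptive adversary (the adversary against $B$ serving simultaneously as an adversary against the simulated $A$).

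It remains to re-run the proof of Theorem~\ref{thm:alg+lower}(b) with $\tilde D$ in place of the original well-mixed distribution. For any pair $(v_i, v_j)$ of nodes and any pair of distinct tokens $(t_i, t_j)$, the individual missing probability under $\tilde D$ is either $0$ (on the $k$ pinned pairs, which only helps) or $1/4$, so the event ``$v_i$ misses $t_j$ or $v_j$ misses $t_i$'' has probability at most $1/2$. The union-bound calculation in the proof of Theorem~\ref{thm:alg+lower}(b) then still caps the largest half-empty configuration at the start by $5\log n$ w.h.p., yielding at most $O(\log n)$ useful exchanges per round via Lemmas~\ref{lem:free+edge} and~\ref{lem:monotone}. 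Meanwhile, only $k$ of the $nk$ initial pairs are deterministically present, so the expected number of missing pairs is still $(n{-}1)k/4 = \Omega(nk)$, and a Chernoff bound gives $\Omega(nk)$ missing pairs w.h.p. Dividing yields a $\Omega(nk/\log n)$ lower bound on $T$, contradicting $T = o(nk/\log n)$. The main obstacle is this last step---confirming that pinning the $k$ pairs of $S_0$ does not spoil the probabilistic half-empty analysis---which works out precisely because pinning can only lower per-pair missing probabilities below the original $1/4$.
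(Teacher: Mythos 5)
Your proof is correct, and it takes a genuinely different route from the paper's. The paper keeps the well-mixed distribution $C^*$ independent of $C$, and then uses Hall's theorem to argue that, with high probability over $C^*$, there is a perfect matching between nodes (in a bipartite graph where $v$ is joined to $u$ if $u$'s tokens in $C^*$ contain $v$'s tokens in $C$); by the probabilistic method it then fixes a good $C^*$ and matching $M$ simultaneously, and has the algorithm $A^*$ (started from $C^*$) simulate $A$ on the $M$-permuted copy of $C$, ignoring the extra tokens that $C^*$ supplies. Your approach instead builds a \emph{coupled} distribution $\tilde D$ that pins the $k$ pairs of $S_0$ and is otherwise i.i.d.\ $3/4$, so no matching step is needed at all: $S_0$ sits inside every realization of $\tilde D$, making the simulation immediate. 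The price you pay is that $\tilde D$ is not literally the distribution of Theorem~\ref{thm:alg+lower}(b), so you must re-run that proof for $\tilde D$; you do this correctly, noting that pinning can only lower each per-pair ``both-directions-missing'' probability below $1/2$ (the indicators across unordered pairs remain disjoint and hence independent), so the $\bigl(1/2\bigr)^{\binom{l}{2}}$ bound on half-empty configurations survives, and that pinning removes only $k$ of the $nk$ pairs from the Chernoff count, leaving $\Omega(nk)$ missing pairs w.h.p. Both routes are valid; yours avoids Hall's theorem and the somewhat delicate ``delete extra tokens / treat as absent'' bookkeeping in the paper's reduction, at the cost of having to re-inspect the half-empty-configuration calculation, which you do carefully.
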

\begin{proof}
We consider an initial distribution $C$ where each token is at exactly
one node, and no node has more than one token. Let $C^*$ be an initial
token distribution in which each node has each token independently
with probability $3/4$.  By Theorem~\ref{thm:alg+lower}, any online
algorithm starting from distribution $C^*$ needs $\Omega(kn/\log n)$
rounds with high probability.

We construct a bipartite graph on two copies of $V$, $V_1$ and
$V_2$. A node $v \in V_1$ is connected to a node $u \in V_2$ if in
$C^*$ $u$ has all the tokens that $v$ has in $C$.  We first show,
using Hall's Theorem, that this bipartite graph has a perfect matching
with very high probability.  Consider a set of $m$ nodes in $V_2$. We
want to show their neighborhood in the bipartite graph is of size at
least $m$. We show this condition holds by the following 2 cases. If
$m < 3n/5$, let $X_i$ denote the neighborhood size of node $i$. We
know $\expect{X_i} \ge 3n/4$. Then by Chernoff bound
\[ \prob{X_i < m} \le \prob{X_i < 3n/5} \le e^{-\frac{\rb{1/5}^2 \expect{X_i}}{2}} = e^{-\frac{3n}{200}}.\]
By union bound with probability at least $1-n\cdot e^{-3n/200}$ the
neighborhood size of every node is at least $m$. Therefore, the
condition holds in the first case. If $m \ge 3n/5$, we argue that the
neighborhood size of any set of $m$ nodes from $V_2$ is $V_1$ with high
probability. Consider a set of $m$ nodes, the probability that a given
token $t$ is missing in all these $m$ nodes is $(1/4)^m$. Thus the
probability that any token is missing in all these nodes is at most
$n(1/4)^m \le n(1/4)^{3n/5}$. There are at most $2^n$ such sets. By
union bound, with probability at least $1-2^n\cdot n(1/4)^{3n/5} =
1-n/2^{n/5}$, the condition holds in the second case.

By applying the union bound, we obtain that with positive probability
(in fact, high probability), $C^*$ takes $\Omega(nk/\log n)$ rounds
and there is a perfect matching $M$ in the above bipartite graph.  By
the probabilistic method, thus both $C^*$ and $M$ exist.  Given such
$C^*$ and $M$, we complete the proof as follows.  For $v\in V_2$, let
$M(v)$ denote the node in $V_1$ that got matched to $v$.  If there is
an algorithm $A$ that runs in $T$ rounds from starting state $C$, then
we can construct an algorithm $A^*$ that runs in the same number of
rounds from starting state $C^*$ as follows. First every node $v$
deletes all its tokens except for those which $M(v)$ has in $C$. Then
algorithm $A^*$ runs exactly as $A$.  Thus, the lower bound of
Theorem~\ref{thm:alg+lower}, which applies to $A^*$ and $C^*$, also
applies to $A$ and $C$. 
\end{proof}

\begin{proof}[Proof of Theorem~\ref{thm:alg+lower}(a)]
We extend our proof in Lemma~\ref{lem:alg+lower.single} to the inital
distibution $C$ where each token starts at exactly one node, but nodes
may have multiple tokens.  We consider the following two cases.

The first case is when at least $n/2$ nodes start with some
token. This implies that $k\ge n/2$.  Let us focus on the $n/2$ nodes
with tokens. Each of them has at least one unique token. By the same
argument used in Lemma~\ref{lem:alg+lower.single}, disseminating these
$n/2$ distinct tokens to $n$ nodes takes $\Omega(n^2/\log n)$
rounds. Thus, in this case the number of rounds needed is
$\Omega(kn/\log n)$.

The second case is when less than $n/2$ nodes start with some
token. In this case, the adversary can group these nodes together, and
treat them as one super node. There is only one edge connecting this
super node to the rest of the nodes. Thus, the number of useful token
exchanges provided by this super node is at most one in each round. If
there exsits an algorithm that can disseminate $k$ tokens in
$o(kn/\log n)$ rounds, then the contribution by the super node is
$o(kn/\log n)$. And by the same argument used in
Lemma~\ref{lem:alg+lower.single} we know dissemination of $k$ tokens
to $n/2$ nodes (those start with no tokens) takes $\Omega(kn/\log n)$
rounds. Thus, the theorem also holds in this case. 
\end{proof}

\section{Upper bound in the weakly adaptive adversary model}
\label{sec:weakly_adaptive}

In this section, we first analyze the \symdiff~protocol starting from
a well-mixed distribution of tokens and prove
Theorem~\ref{thm:rand_sym_diff} (presented in
Section~\ref{sec:sym_diff_analysis}), and then show how to sample an
element from the symmetric difference of two sets efficiently in the
two-player communication complexity model (presented in
Section~\ref{sec:sym_diff_sampling}). However, before doing that, we
present the following lower bound that shows randomization is crucial
for the \symdiff~protocol.

\junk{The most natural protocol in this model would, perhaps, be the {\em
  set difference} (SET-DIFF) protocol: in each round, each node sends
to each of its neighbors a token that it has but the neighbor does
not. This ensures along every edge $(u,v)$, there is a token flow from
node $u$ to node $v$ as long as the set of tokens held by node $u$ is
not a subset of that held by node $v$ and vice versa. However, it is
known that it needs $\Omega(k)$ communication bits to find a token in
the set difference and thus this protocol cannot be efficient.}

\begin{theoremR}
\label{thm:det_sym_diff}
Consider the protocol DET-SYM-DIFF for $k$-gossip in the weakly
adaptive adversary model which is identical to the \symdiff~protocol
except for, in each round, the token sent along each edge $(u,v)$ is
chosen deterministically from the symmetric difference of the set of
tokens held by node $u$ and the set of tokens held by node
$v$. Starting from an initial token distribution where one node has
all the $k$ tokens and others have none, a strongly adaptive adversary
can force $\Omega(nk)$ rounds for the DET-SYM-DIFF protocol to
disseminate the $k$ tokens to the $n$ nodes.
\end{theoremR}

\iflong
\begin{proof}
Let the node $u$ start with all the tokens and nodes $v_1, \ldots,
v_{n-1}$ start with no tokens. The adversary can connect $u, v_1,
\ldots v_{n-1}$ in a line in the first round thereby guaranteeing only
node $v_1$ gets a token, say $t_1$. In the next round, the adversary
connects $u, v_2, \ldots, v_{n-1}, v_1$ in a line. In this round, node
$v_2$ and $v_{n-1}$ will both get token $t_1$.The adversary can
continue this way for $\frac{n-2}{2} + 1$ rounds, at which point all
the nodes $v_1, v_2, \ldots, v_{n-1}$ will have token $t_1$. We can
repeat this argument for all the $k$ tokens proving the lower bound of
$\Omega(nk)$.
\end{proof}
\fi
\subsection{Analysis of \symdiff\ starting from well-mixed distributions}
\label{sec:sym_diff_analysis}

For the proof of Theorem~\ref{thm:rand_sym_diff}, we will assume that
we start from the initial token distribution where each node has each
token independently with probability $\frac{1}{2}$.  It is easy to
extend it to any positive constant probability.  We need the following
definition. We call a maximal set of nodes that holds the same set of
tokens at the start of a round $r$ to be a {\em group} for round $r$.

\begin{lemma}
\label{lem:sym_diff_groupsize}
In a token distribution where each node has each token independently
with probability $\frac{1}{2}$, the union of the set of tokens of any
$\ell$ nodes misses at most $\frac{n+k}{\ell}$ tokens with high
probability.
\end{lemma}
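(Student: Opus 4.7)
The plan is a union bound over the $\binom{n}{\ell}$ subsets of size $\ell$ combined with a standard binomial tail estimate. Fix any such subset $S$, and let $M_S$ denote the number of tokens missing from every node in $S$. Since the token assignments are independent across both tokens and nodes, $M_S$ is distributed as $\mathrm{Bin}(k, 2^{-\ell})$, so its mean is $k/2^{\ell}$, far below the target threshold $(n+k)/\ell$ in every nontrivial regime of $\ell$.

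First I would dispose of the trivial case $\ell \le 1 + n/k$, where $(n+k)/\ell \ge k$ and the bound $M_S \le k$ is automatic. For larger $\ell$, set $s := \lceil (n+k)/\ell\rceil \le k$. Then by the elementary inequality $\Pr[\mathrm{Bin}(k,p) \ge s] \le \binom{k}{s}\, p^{s}$, applied with $p = 2^{-\ell}$ and using $\ell s \ge n+k$,
\[
\Pr[M_S \ge s] \;\le\; \binom{k}{s}\, 2^{-\ell s}\;\le\; \binom{k}{s}\, 2^{-(n+k)}.
\]
A union bound over the $\binom{n}{\ell}$ subsets reduces the lemma to proving
\[
\binom{n}{\ell}\,\binom{k}{s}\, 2^{-(n+k)} \;\le\; n^{-\Omega(1)}.
\]

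I expect to verify this final inequality via entropy/Stirling estimates, splitting into two regimes based on $\ell$. For $\ell \le \sqrt{n+k}$ the threshold $s$ is itself $\Omega(\sqrt{n+k})$, and the $2^{-(n+k)}$ factor comfortably swamps the upper bound $(en/\ell)^{\ell}(ek/s)^{s}$ on the binomial product. For larger $\ell$, the constraint $\ell s \ge n+k$ forces $\ell/n$ and $s/k$ apart enough that $nH(\ell/n) + kH(s/k) < n+k$, producing the needed entropy gap. The main obstacle I anticipate is the borderline regime $\ell \approx n/2,\ s\approx k/2$, which is feasible only when $k$ is a small constant; there both entropy terms approach $1$ and the required polynomial slack has to come from the $1/\sqrt{nk}$ Stirling correction to the binomial coefficient estimates. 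Finally, if the lemma is meant to hold uniformly over all $\ell \in [1,n]$, one additional union bound over the $n$ values of $\ell$ preserves the high-probability guarantee at a cost of a single factor of $n$.
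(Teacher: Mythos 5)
Your proposal follows essentially the same route as the paper: both arguments union-bound over $\binom{n}{\ell}$ node subsets and $\binom{k}{s}$ token subsets, with each pair contributing probability $2^{-\ell s} \le 2^{-(n+k)}$, giving exactly the bound $\binom{n}{\ell}\binom{k}{\lceil(n+k)/\ell\rceil}2^{-(n+k)}$ (your $M_S \sim \mathrm{Bin}(k,2^{-\ell})$ framing plus $\Pr[\mathrm{Bin}(k,p)\ge s]\le\binom{k}{s}p^s$ is just the paper's counting argument restated). The one place you go beyond the paper is the final step: the paper merely asserts the product "is inverse polynomial in both $n$ and $k$," whereas you correctly note that establishing this requires an entropy/Stirling case analysis and that the tightest regime ($\ell\approx n/2$, $s\approx k/2$, forcing $k=O(1)$) yields only $\Theta(1/\sqrt{nk})$, so the "high probability" guarantee there rests on the $1/\sqrt{\cdot}$ Stirling correction — a real subtlety the paper leaves implicit.
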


\iflong
\begin{proof}
There are ${n \choose \ell}$ ways of choosing $\ell$ nodes out of $n$
nodes, and ${k \choose \frac{n+k}{\ell}}$ ways of choosing
  $\frac{n+k}{\ell}$ tokens out of $k$ tokens. Thus the probability
  that the union of the set of tokens of any $\ell$ nodes misses
 more than $\frac{n+k}{\ell}$ tokens is at most
\[ {n \choose \ell} {k \choose \frac{n+k}{\ell}} \left(\frac{1}{2}\right)^{n+k}, \]
which is inverse polynomial in both $n$ and $k$.
\end{proof}
\fi

Since in any round, no token can be exchanged along an edge between
two nodes of the same group, we will consider only the edges that
connect two nodes from different groups. We call such edges {\em
  inter-group} edges for that round. In fact, we will prove the
theorem in a stronger sense where we let the adversary orient the
inter-group edges to determine the direction of token movement
along all these edges, and the token sent along each of these edges is
chosen uniformly at random from the symmetric difference conditioned
on this orientation. (The adversary must respect the condition that
there can be no token movement from a node $u$ to a node $v$ if the
set of nodes held by node $u$ is a subset of that held by node $v$.)
We define one unit of progress in a round as a node receiving a token
in that round that it did not have at the start of the round.

\begin{lemma}
\label{lem:sym_diff_progress}
With high probability, the following holds for every node $v$ and
every round $i$: If $v$ misses $m > \log n$ tokens at the start of
round $i$ and it has $d > \log k$ incoming inter-group edges in that
round, then node $v$ makes $\Omega(\min\{m,d\})$ units of progress in
round $i$. Here, the probability is over the initial token
distribution and the randomness used in the protocol.
\end{lemma}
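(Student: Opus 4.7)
My plan is to fix the round $i$ and the node $v$ satisfying the hypotheses, analyze the progress conditional on the state at the start of round $i$, and then union-bound over $(v,i)$. Let $S_v$ denote the set of $m$ tokens missing at $v$; let $u_1,\ldots,u_d$ be the tails of the $d$ incoming inter-group edges (after the adversary's orientation); and let $T_j := u_j\setminus v \subseteq S_v$ be the set from which edge $j$ samples, which is nonempty by the orientation rule. The token $t^*_j$ sent along edge $j$ is uniform over $T_j$, independently across $j$, and the progress is the number of distinct values among $t^*_1,\ldots,t^*_d$.

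For the expected value, set $c_t := |\{j : t \in T_j\}|$ for each $t\in S_v$. Since $|T_j|\le m$, the probability that $v$ fails to receive a particular $t$ is at most $(1-1/|T_j|)^{c_t} \le (1-1/m)^{c_t}\le e^{-c_t/m}$, so
\[
\expect{\text{progress}} \;\ge\; \sum_{t\in S_v}\bigl(1-e^{-c_t/m}\bigr) \;\ge\; \sum_{t\in S_v}\min\!\bigl(c_t/(2m),\;1/2\bigr).
\]
I will argue via the well-mixed initial distribution that $\sum_t c_t = \sum_j |T_j| = \Omega(dm)$, which together with a case split on whether individual $c_t$'s exceed $m$ yields $\expect{\text{progress}}=\Omega(\min\{d,m\})$. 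Because tokens only accumulate, $|u_j\cap S_v|$ at round $i$ is at least $|u_j^{\text{init}}\cap S_v|$, and at the initial distribution $|u_j^{\text{init}}\cap S|$ is $\mathrm{Bin}(m,1/2)$ for every fixed size-$m$ subset $S\subseteq[k]$. A Chernoff bound plus a union bound over the $\binom{k}{m}\binom{n}{d}\le k^m n^d$ choices of $(S,\{u_j\})$ then gives $\sum_j|u_j^{\text{init}}\cap S|\ge dm/4$ uniformly, with failure probability at most $e^{-\Omega(dm)+O(m\log k+d\log n)}$, which is $1/\mathrm{poly}(nk)$ exactly because $d>\log k$ and $m>\log n$.

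For concentration, the variables $t^*_1,\ldots,t^*_d$ are mutually independent, and the indicators $Y_t:=\mathbf{1}[\exists j:t^*_j=t]$ arise from a balls-into-bins experiment, so the family $\{Y_t\}$ is negatively associated. A standard Chernoff bound for negatively associated binary variables then yields $\prob{\text{progress}\le \expect{\text{progress}}/2}\le \exp(-\Omega(\min\{d,m\}))$. The hypotheses $m>\log n$ and $d>\log k$ make this $1/(nk)^{\omega(1)}$, small enough to union-bound over the $n$ choices of $v$ and the $\mathrm{poly}(n,k)$ rounds executed by the protocol.

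The hard part is the uniform lower bound $\sum_j|T_j|=\Omega(dm)$: the adversary sees the entire history and may be tempted to pick neighbors whose token sets overlap heavily with $v$'s, forcing each $T_j$ to be small. The argument above finesses this by transferring the estimate back to the initial distribution and exploiting monotone accumulation, so that the work reduces to a single union bound over all size-$m$ subsets of $[k]$ and all $d$-tuples of nodes. Making these combinatorial counts absorbable by the Chernoff exponent is precisely why the lemma needs $m>\log n$ and $d>\log k$, and getting the constants in the Chernoff calculation to line up cleanly will require some care.
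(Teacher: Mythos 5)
Your proof is correct and follows essentially the same approach as the paper's: transfer the density estimate back to the initial well-mixed distribution via monotonicity of token accumulation, union-bound over the $\binom{k}{m}\binom{n}{d}$ choices of missing-token set and in-neighbor tuple, and derive $\Omega(\min\{m,d\})$ progress from the resulting density. The paper phrases its density claim as ``$\alpha d$ in-neighbors each carry $\alpha m$ of $v$'s missing tokens,'' which is equivalent up to constants to your aggregate bound $\sum_j |T_j|=\Omega(dm)$; your explicit expectation computation and negative-association Chernoff step usefully flesh out what the paper leaves as ``standard calculations.''
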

\iflong
\begin{proof}
First we prove the claim that that for some sufficiently small
constant $\alpha < 1$, with probability $1-o(1)$, the following holds
for every node $v$ and every round $i$: If $v$ misses $m > \log n$
tokens at the start of round $i$ and it has $d > \log k$ in-neighbors
in that round, then $\alpha d$ of these neighbors each have, at the
start of round $i$, $\alpha m$ tokens that node $v$ misses. Let us
compute the probability that the claim is not true for some node $v$
in some round $i$. The $d$ inter-group in-neighbors can be chosen in
at most ${n \choose d}$ different ways and the $m$ missing tokens can
be chosen in at most ${k \choose m}$ different ways. There are at most
${d \choose \alpha d}$ ways of choosing the in-neighbors that do not
have the claimed number of missing tokens, and for each of them there
are at most ${m \choose \alpha m}$ ways of choosing which of these
tokens they miss. Thus the probability of failure is at most
\[ {n \choose d} {k \choose m} {d \choose \alpha d} {m \choose \alpha m}^{(1-\alpha)d} \left(\frac{1}{2}\right)^{(1 - \alpha)^2 md}, \]
which is $o(\frac{1}{(nk)^2})$ since $m > \log n$ and $d > \log k$ and
$\alpha$ is chosen sufficiently small. Noting that there are at most
$n$ choices for $d$ and at most $k$ choices for $m$, the claim
follows. From the above claim, the lemma follows by standard
calculations.
\end{proof}
\fi

\begin{proof}[Proof of Theorem~\ref{thm:rand_sym_diff}]
We color each of the rounds {\em red}, {\em blue}, {\em green} or {\em
  black}. If in a round, there is a node $v$ that misses less than
$\log n$ tokens and makes at least one unit of progress in that round,
we color the round red. If a round is not colored red, and there is a
node that gets a constant fraction of its missing tokens in that round
(the same fraction as in Lemma~\ref{lem:sym_diff_progress}), we color
it green. If a round is neither colored red nor colored green, we
color the round blue. \junk{It is clear from
  Lemma~\ref{lem:sym_diff_groupsize} and
  Lemma~\ref{lem:sym_diff_progress}, that with probability $1 - o(1)$,
  we will be able to color all the rounds red, green or blue till the
  protocol completes $k$-gossip. Thus we just need to bound the number
  of red, green and blue rounds.}

It is immediate that there can be at most $n \log n$ red rounds since
each of the $n$ nodes can be responsible for coloring at most $\log n$
rounds red. Similarly, there can be at most $O(n \log k)$ green rounds
since each node can be responsible for coloring at most $O(\log k)$
rounds green. Now let us turn to the blue rounds. Fix a blue round and
let there be $r$ groups in that round. Using
Lemma~\ref{lem:sym_diff_groupsize}, we infer that there are at most
$(n+k)r$ tokens missing in total at the start of this round. We also
note that there must be at least $r-1$ inter-group edges in this round
and combining this with Lemma~\ref{lem:sym_diff_progress} and the fact
that this round was not colored red or green, we infer that we make
$\Omega(\frac{r}{\log k})$ units of progress in this round.

We can label each blue round by the smallest number of groups in a
blue round seen so far. The sequence of labels is non-increasing and
let us say it starts from $s \leq n$. We divide the blue rounds in
partitions where the $i$'th partition contain those with labels in
$[s/2^{i-1},s/2^i)$. There are at most $\log n$ partitions. From the
  above argument, we see that there can be at most $O((n + k)\log k)$
  blue rounds in each partition, which implies a bound of $O((n+k)
  \log n \log k)$ for the total number of blue rounds. This completes
  the proof of the theorem.
\end{proof}

\subsection{Uniform sampling from symmetric difference}
\label{sec:sym_diff_sampling}

We now restate and prove our result on a
communication-efficient protocol to sample from the
symmetric difference of two sets.

\tccSample

We now explain how we obtain a communication-efficient protocol to
sample from the symmetric difference $A \oplus B$ of two sets $A, B
\subseteq [k]$, proving Theorem \ref{thm:sym_diff_sampling}.

Out starting point is Nisan and Safra's protocol \cite{Nisan93} to
determine the least $i$ such that $i \in A \oplus B$. (In
\cite{Nisan93} the protocol is phrased as deciding if $A > B$, when
$A$ and $B$ are viewed as $k$-bit integers. It is easy to switch
between the two.) For uniform sampling from $A \oplus B$, our idea is
to first let the parties permute their sets according to a random
permutation $\sigma$, then run Nisan and Safra's protocol. This
results in an explicit protocol for uniform generation from $A \oplus
B$ with communication $O(\log k/\eps)$ that uses \emph{public
  coins}. A standard transformation to private coins via
\cite{Newman91} results in a protocol that is not explicit.

To obtain an explicit, private-coin protocol we derandomize the space
of random permutations $\sigma$.  The key idea is that it is
sufficient to have a distribution on permutations $\sigma$ such that,
for any set $D = A \oplus B$, any element in $D$ has roughly the same
probability of \emph{being the first element in $D$ to appear in the
  sequence} $\sigma(1), \sigma(2), \sigma(3), \ldots$. We then
construct such a space of permutations with seed length $O(\lg^{3/2}
(k/\eps))$ using Lu's pseudorandom generator for combinatorial
rectangles \cite{Lu02}
(cf.~\cite{Nis92,NiZ96,INW94,EvenGLNV98,ArmoniSWZ96,Lu02,Viola-rbd}).
Plugging a better pseudorandom generator for combinatorial rectangles
in our argument would result in a protocol for uniform sampling from
$A \oplus B$ with communication $\tilde O(\log k/\epsilon)$ and error
$\epsilon$.

As a first step, we have the following simple
derandomization of Nisan and Safra's protocol
\cite{Nisan93}, essentially from \cite{Viola-ccsum}.

\begin{lemma} \label{lemma:NisanSafraViola}
There is an explicit, private-coin protocol to determine
the least $i \in A \oplus B$, where $A, B \subseteq [k]$,
with error $\alpha$ and communication $O(\lg(k/\alpha)
\lg \lg k) = \tilde O(\lg k/\alpha)$.
\end{lemma}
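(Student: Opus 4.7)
The plan is to derandomize the public-coin protocol of Nisan and Safra \cite{Nisan93} for the greater-than / first-disagreement problem on $k$-bit strings. Identifying each set $S \subseteq [k]$ with its characteristic vector in $\{0,1\}^k$, the least $i \in A \oplus B$ is exactly the least coordinate on which these two vectors disagree, so the Nisan--Safra protocol solves the problem directly with public coins; the task is only to replace those coins by explicit private randomness.

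First I would recall the structure of Nisan--Safra. The protocol maintains an interval $[\ell, r] \subseteq [k]$ that provably contains the least disagreement, and at each of $O(\log k)$ stages shrinks the interval by running a randomized equality test on a prefix of $A$ and $B$. Each equality test, realized by fingerprinting, uses $O(\log(1/\alpha_0))$ public-coin bits to achieve error $\alpha_0$; with the amortized tree analysis of \cite{Nisan93} the overall public-coin communication is $O(\log(k/\alpha) \log \log k)$ for total error $\alpha$.

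Next, to obtain an explicit private-coin implementation, I would realize each fingerprint by an explicit $\alpha_0$-biased hash over $\mathbb{F}_2^k$: Alice draws a seed $s$ of length $O(\log(k/\alpha_0)) = O(\log(k/\alpha))$ from her private coins, sends $s$ to Bob, and both parties evaluate the inner product $\langle s, \cdot \rangle$ on their respective input vectors; Bob then returns his bit for comparison. Classical constructions of $\alpha_0$-biased sets (Naor--Naor and others) have seed length of this order, and $\alpha_0$-bias immediately bounds the collision probability of any two distinct vectors by $\alpha_0$. Using one fresh seed per stage and summing over the $O(\log k)$ stages, the total communication remains $O(\log(k/\alpha) \log \log k)$, and by a union bound the overall error is at most $\alpha$.

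The main technical point --- presumably the content of the derandomization attributed to \cite{Viola-ccsum} --- is checking that the Nisan--Safra amortized analysis, originally stated for truly independent fingerprints, still delivers the $\log \log k$ savings under the explicit $\alpha_0$-biased hashes rather than under truly random ones. Since the amortization only requires each stage's equality test to fail with probability at most $\alpha_0$ conditioned on the previous transcript, and $\alpha_0$-bias gives exactly such a pairwise collision bound independently of context, the analysis carries through essentially verbatim. Once this is verified, the protocol is manifestly explicit, private-coin, and has the claimed communication and error.
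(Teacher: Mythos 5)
Your proposal captures the right starting point (Nisan--Safra as a randomized walk that only needs constant-error equality tests, realized by small-bias fingerprints) but the arithmetic at the crucial step is wrong, and the idea that actually delivers the $\log\log k$ factor is missing. If each of the $O(\log k)$ (or $O(\log(k/\alpha))$) stages communicates a \emph{fresh} seed for a small-bias hash of length $O(\log k)$, the total communication is $O(\log(k/\alpha)\cdot \log k)$, not $O(\log(k/\alpha)\log\log k)$; you cannot ``sum over the stages'' and come out with a $\log\log k$ factor. Relatedly, you misattribute the $\log\log k$: the public-coin Nisan--Safra protocol already achieves $O(\log(k/\alpha))$ communication with no $\log\log$ overhead whatsoever; the $\log\log k$ factor is not part of the amortized tree analysis but is the \emph{cost} paid in the derandomization.

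The missing idea is to derandomize the \emph{entire stream} of public randomness at once using Nisan's space-bounded pseudorandom generator, rather than per-stage. View the protocol as a one-way streaming algorithm: it walks for $O(\log(k/\alpha))$ steps on a binary tree, keeps only the current tree node in memory (space $S = O(\log(k/\alpha))$), and at each step reads a fresh block of $O(\log k)$ public-coin bits to run a small-bias equality test, so it consumes a one-way stream of $R = O(\log(k/\alpha)\cdot\log k)$ random bits. Nisan's generator for space-$S$ streaming algorithms has seed length $S\log(R/S) = O(\log(k/\alpha)\log\log k)$ and error $2^{-S} = \alpha/k$. Alice privately draws one such seed, sends it to Bob, and both parties locally expand it into the pseudorandom stream and run the public-coin protocol on it. The single seed transmission dominates and gives the stated $O(\log(k/\alpha)\log\log k)$ bound, whereas your per-stage fresh seeds lose a $\log k/\log\log k$ factor.
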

\iflong
\begin{proof}[Proof sketch]
Nisan and Safra's protocol amounts to walking for $O(\lg
k/\alpha)$ on a certain binary tree. At every node, the
two parties just need to determine with error
probability, say, $1/100$ if a portion of their inputs
are different. This latter task can be achieved using
small-bias generators with public randomness $O(\lg k)$
and communication $O(1)$.\cite{NaN93,AGHP92}

The resulting protocol can be seen as a randomized
algorithm needing a one-way stream of $R := O(\lg
k/\alpha) \lg k$ random bits and using space $S := O(\lg
k/\alpha)$ to store the current node.

Nisan's space-bounded generator \cite{Nis92} can reduce
the randomness to $S \lg(R/S) = \lg(k/\alpha) \lg \lg k$
with error loss $2^{-S} = \alpha/k$.

The parties start by exchanging a seed for Nisan's
generator, and then proceed with the previous protocol.
\end{proof}
\fi

Specifically, for given $k$ and $\epsilon$ as in Theorem
\ref{thm:sym_diff_sampling} we set $d = k \log
\left(\frac{3k}{\epsilon}\right)$ and $\alpha :=
\epsilon/3kd$. Alice then picks a random seed of length
$s(k,d,\alpha)$ for a generator that fools every
combinatorial rectangle with universe size $k$ and $d$
dimensions with error $\alpha$. That is, if $X$ is the
output of the generator on a random seed, we have, for
every set $R := R_1 \times R_2 \times \cdots R_d
\subseteq [k]^d$,
$$| \Pr[X \in R] - |R|/k^d| \le \alpha.$$

Alice sends the seed to Bob.

Both Alice and Bob expand the seed into a sample $X$ of
the generator, and use $X$ to generate a permutation
$\sigma$ as follows. Let the number of distinct elements
of $[k]$ that appear in $X$ be $t$. The permutation
$\sigma$ is constructed by defining $\sigma(i)$ to be the
$i$'th distinct element of $[k]$ that appears in $X$ as
we scan it from the beginning, for $i \leq t$. For every
$i
> t$, $\sigma(i)$ is defined to be a distinct element not
appearing in $X$ in an arbitrary but deterministic way
that is fixed before the start of the protocol and both
Alice and Bob are aware of it. (For concreteness, it can
simply be to assign the elements not appearing in $X$ by
order).

To show the correctness of our protocol we need the
following lemma.

\begin{lemma}
\label{lem:permutation} Let $X \in [k]^d$ be the output
of a combinatorial rectangle generator with error $\alpha
= \epsilon/3kd$, over a uniform seed. Let $D$ be any set,
and let $j$ be any element in $D$. The probability that
$j$ appears in a coordinate of $X$ before any other
element of $D$ is $\geq \frac{1}{|D|} -
\frac{2\epsilon}{3k}$.
\end{lemma}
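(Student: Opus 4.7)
The plan is to write the event ``$j$ is the first element of $D$ to appear in $X$'' as a disjoint union over the position $i \in [d]$ at which $j$ first appears, and to observe that each of the resulting events is a combinatorial rectangle; the conclusion then follows by applying the pseudorandom guarantee to each rectangle and summing.

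More precisely, for each $i \in [d]$ let $E_i$ be the event that $X_i = j$ and $X_{i'} \in [k] \setminus D$ for every $i' < i$. The events $E_1,\ldots,E_d$ are pairwise disjoint, and their union is exactly the event in question. Each $E_i$ is a combinatorial rectangle: coordinate $i$ is fixed to $\{j\}$ (size $1$), coordinates $i' < i$ are restricted to $[k]\setminus D$ (size $k-|D|$), and coordinates $i' > i$ are unconstrained (size $k$). Its volume under the uniform distribution on $[k]^d$ is therefore $(1/k)(1-|D|/k)^{i-1}$, and summing a geometric series gives
\[
\Pr_{\text{unif}}\!\bigl[\,\textstyle\bigcup_i E_i\bigr] \;=\; \frac{1 - (1-|D|/k)^d}{|D|}.
\]

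Next I would use the choice $d = k\log(3k/\epsilon)$ to bound the tail: $(1-|D|/k)^d \le e^{-|D|\log(3k/\epsilon)} \le \epsilon/(3k)$ whenever $|D| \ge 1$, so the uniform probability is at least $1/|D| - \epsilon/(3k|D|)$. Now I invoke the pseudorandom rectangle generator: each $E_i$ is a rectangle of the appropriate form, so $|\Pr[X \in E_i] - \Pr_{\text{unif}}[E_i]| \le \alpha$. Summing over the $d$ disjoint events incurs total error at most $d\alpha = \epsilon/(3k)$ by the triangle inequality. Combining the two estimates,
\[
\Pr[X \in \textstyle\bigcup_i E_i] \;\ge\; \frac{1}{|D|} - \frac{\epsilon}{3k|D|} - \frac{\epsilon}{3k} \;\ge\; \frac{1}{|D|} - \frac{2\epsilon}{3k},
\]
which is exactly what the lemma claims.

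There is no real obstacle here; the only thing to be careful about is that the events $E_i$ are genuinely pairwise disjoint (so we may add the generator's errors with a single triangle inequality instead of a union bound over all $2^d$ subsets of indices), and that the bound $d\alpha$ together with the tail $(1-|D|/k)^d$ both come out to at most $\epsilon/(3k)$ under our specified parameters. The factor of $2$ in the final bound absorbs both contributions.
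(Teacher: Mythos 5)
Your proof is correct and follows essentially the same approach as the paper: decompose the event into the disjoint rectangles indexed by the position at which $j$ first appears, apply the generator guarantee to each of the $d$ rectangles to accrue total error $d\alpha = \epsilon/(3k)$, sum the geometric series, and bound the tail $(1-|D|/k)^d$ using the choice $d = k\log(3k/\epsilon)$. The only cosmetic difference is that the paper keeps the tail as $(\epsilon/3k)^{|D|}$ before invoking $|D|\ge 1$, whereas you pass directly to $\epsilon/(3k)$; the final bound is the same.
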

\iflong
\begin{proof}

We note that the desired probability is the union of
disjoint rectangles, and then apply the property of the
generator:
\begin{align*}
\lefteqn{\Pr\left[X \in \bigcup_{0 \leq t < d} ([k] \setminus D)^t \times \{j\} \times [k]^{d-t-1} \right]} && \\
& = \sum_{0 \leq t < d} \Pr\left[X \in ([k] \setminus D)^t \times \{j\} \times [k]^{d-t-1}\right]\\
& \ge  \sum_{0 \leq t < d} |([k] \setminus D)^t \times \{j\} \times [k]^{d-t-1}|/k^d - \frac{\epsilon}{3k}\\
& =  \frac{1}{k} + \left(\frac{k-|D|}{k}\right) \frac{1}{k} + \ldots + \left(\frac{k-|D|}{k}\right)^{d-1} \frac{1}{k} - \frac{\epsilon}{3k} \\
& =  \frac{1}{k} \left(1 + \left(1-\frac{|D|}{k}\right) + \ldots + \left(1-\frac{|D|}{k}\right)^{d-1} \right) - \frac{\epsilon}{3k} \\
& =  \frac{1}{|D|} \left(1 - \left(1-\frac{|D|}{k}\right)^d \right) - \frac{\epsilon}{3k} \\
& \geq  \frac{1}{|D|} \left(1 - e^{-\frac{|D|}{k} k \log(\frac{3k}{\epsilon})}\right) - \frac{\epsilon}{3k} \\
& =  \frac{1}{|D|} - \frac{1}{|D|} \left(\frac{\epsilon}{3k}\right)^{|D|} - \frac{\epsilon}{3k} \\
& \geq  \frac{1}{|D|} - \frac{2\epsilon}{3k},
\end{align*}
since $|D| \geq 1$.
\end{proof}
\fi
Now we can complete the proof of
Theorem~\ref{thm:sym_diff_sampling}.

\begin{proof}[Proof of
Theorem~\ref{thm:sym_diff_sampling}] For given
$k,\epsilon$, we set $d = k \log
\left(\frac{3k}{\epsilon}\right)$ and $\alpha :=
\epsilon/3kd$. Alice then picks a random seed of length
$s(k,d,\alpha)$.

If $\sigma$ is chosen such that every element $j \in D$
has probability $\frac{1}{|D|}$ of preceding all other
elements of $D$, then $\sigma(i^*)$ is a uniform random
element of $D$, where $i^*$ is the first position where
the permuted $A$ and $B$ differ. Using
Lemma~\ref{lem:permutation}, we immediately see that if
$\sigma$ is chosen as in the first step of the protocol,
then the distribution of $\sigma(i^*)$ is at most
$\left(\frac{2\epsilon}{3k}\right) |D| \leq
\frac{2\epsilon}{3}$-far from the uniform distribution on
$D$.

For the second part of the protocol we use Lemma
\ref{lemma:NisanSafraViola} with $\alpha := \epsilon/3$.

Overall, the sampled distribution has distance $\le 2\epsilon/3
+ \epsilon/3 = \epsilon$ from the uniform distribution on $D$.

Using the generator in \cite{Lu02} we have $s(k,d,\alpha)
= O(\lg n + \lg d + \lg^{32} 1/\alpha) = O (\lg^{3/2}
n/\epsilon)$. So overall the communication is
$O(\lg^{3/2} n/\epsilon)$.
\end{proof}

\section{Offline token-forwarding algorithms}
\label{sec:centralized}
We present two offline algorithms for $k$-gossip.  The first computes
an $O((n+k)\log^2 n)$-round schedule assuming that each node can send
at most one token to each neighbor in each round
(Section~\ref{sec:multiport}); the second computes an
$O(\min\{n\sqrt{k\log n}, nk\})$-round broadcast schedule assuming
that each node can broadcast at most one token to its neighbors in
each round (Section \ref{sec:upper}).  \junk{ and a bicriteria
  {$\rb{O(n^\epsilon), O(\log n)}$-approximation} algorithm in the
  broadcast model (Section \ref{sec:approx}).  } \junk{Both
  algorithms use a leveled graph constructed from the sequence of
  dynamic graphs which we call the {\em evolution graph}.
  Appendix~\ref{app:centralized} describes this construction and
  contains all omitted proofs.}

\junk{
}

\newcommand{\Srcs}{{\cal S}}
\newcommand{\Snks}{{\cal T}}
\newcommand{\expandedG}[1]{\widehat{G}[#1]}
\newcommand{\Sources}[1]{\overline{S}_{#1}}
\subsection{$O((n+k)\log^2 n)$-round offline schedule}
\label{sec:multiport}
In this section, we present an algorithm for computing an $O((n+k)
\log^2 n)$ round offline schedule.  Our bound is tight to within an
$O(\log^2 n)$ factor since the dissemination of any $k$ tokens to even
a single node of the network requires $\Omega(n + k)$ rounds in the
worst case.  We begin by defining the notion of an evolution graph
that facilitates the design of the offline algorithms.

\smallskip
\noindent
{\em Evolution graph}: Let $V$ be the set of nodes. Consider a dynamic
network of $l$ rounds numbered $1$ through $l$ and let $G_i$ be the
communication graph for round $i$. The evolution graph $\expandedG{l}$
for this network is a directed capacitated graph $G$ with $l+1$ levels
constructed as follows. We create $l+1$ copies of $V$ and call them
$V_0, V_1, V_2, \dots, V_{l}$. $V_i$ is the set of nodes at level $i$
and for each node $v$ in $V$, we call its copy in $V_i$ as $v_i$. For
$i = 1, \ldots, l$, level $i-1$ corresponds to the beginning of round
$i$ and level $i$ corresponds to the end of round $i$. Level $0$
corresponds to the network at the start.  There are two kinds of edges
in the graph.  First, for every node $v$ in $V$ and every round $i$,
we place an edge with infinite capacity from $v_{i-1}$ to $v_{i}$. We
call these edges {\em buffer edges} as they ensure tokens can be
stored at a node from the end of one round to the end of the next.
Second, for every round $i$ and every edge $(u,v) \in G_i$, we place
two directed edges with unit capacity each, one from $u_{i-1}$ to
$v_{i}$ and another from $v_{i-1}$ to $u_{i}$.  We call these edges
as {\em transmit edges} as they correspond to every node transmitting
a message to a neighbor in round $i$; the unit capacity ensures that
in a given round a node can transmit at most one token to each
neighbor.  Figure~\ref{fig:evolution} illustrates our construction.
\iflong
Lemma~\ref{lem:maxflow} explains the usefulness of this construction.
\fi

\begin{figure*}[ht]
\begin{center}
\includegraphics[width=5in]{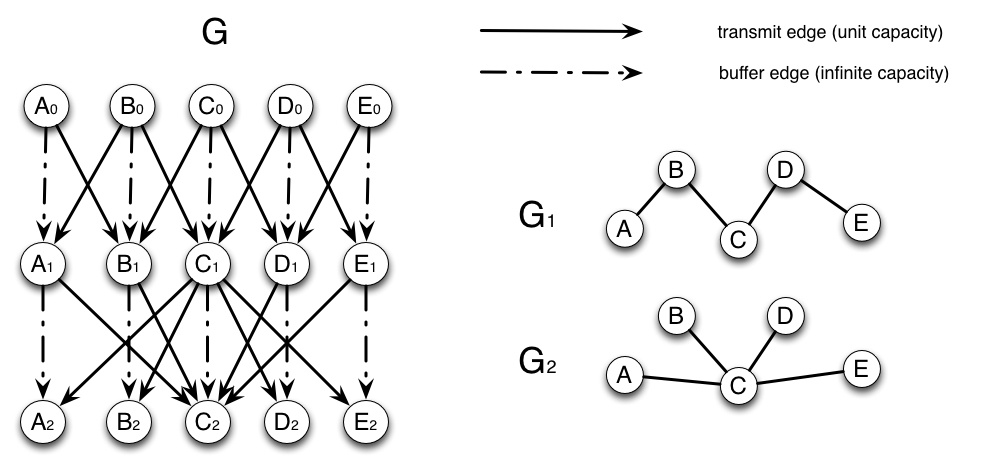}
\caption{An example of how to construct the evolution graph from a
  sequence of communication graphs.}
\label{fig:evolution}
\end{center}
\end{figure*}

\iflong
\begin{lemma}
\label{lem:maxflow}
Let $S$ be a set of source nodes, each with a subset of the $k$ tokens
and let $T$ be a subset of sink nodes.  Let $\expandedG{\ell}$ be an
evolution graph over $\ell$ rounds.  Let $P$ denote a set of
edge-disjoint paths starting from $S$ and ending at $T$.  If $P$
contains for each sink $v$ and each token $i$, a distinct path from a
source containing $i$ to $v$, then $P$ yields an $\ell$-round schedule
for disseminating the $k$ tokens to each node in $T$.
\end{lemma}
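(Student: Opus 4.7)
}
The plan is to convert the path system $P$ in $\expandedG{\ell}$ into an explicit token-forwarding schedule and then verify two things: (i) the schedule respects the one-token-per-directed-edge-per-round bandwidth constraint, and (ii) every required token actually reaches every sink by round $\ell$. The main tool is a straightforward induction on the level index $i$.

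First I would describe the schedule. For each path $\pi \in P$ that originates at a source $s \in S$ carrying token $t$ (where $s$ holds $t$ initially) and terminates at some sink $v \in T$, walk along $\pi$ from level $0$ to level $\ell$ and look at its transmit edges. Each transmit edge has the form $u_{i-1} \to w_i$ and corresponds to the round-$i$ graph edge $(u,w)$. For each such edge traversed by $\pi$, schedule node $u$ to send token $t$ to node $w$ in round $i$; the buffer edges $u_{i-1} \to u_i$ require no action, as they simply encode ``keep token $t$ at $u$ across round $i$.''

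Next I would check the bandwidth constraint. Because each transmit edge in $\expandedG{\ell}$ has unit capacity and the paths in $P$ are edge-disjoint, each directed transmit edge $u_{i-1} \to w_i$ is used by at most one path of $P$ and hence at most one token is scheduled to cross the round-$i$ edge $(u,w)$ from $u$ to $w$. This is exactly what the multiport model allows.

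The main step is the correctness argument, which I would prove by induction on $i$: for every token $t$, every path $\pi \in P$ carrying $t$, and every level $0 \le i \le \ell$, if $\pi$ visits the vertex $u_i$, then node $u$ holds token $t$ at the end of round $i$ (equivalently, the start of round $i+1$). The base case $i=0$ is immediate since $\pi$ begins at a source containing $t$. For the inductive step, the edge of $\pi$ entering level $i$ is either a buffer edge $u_{i-1} \to u_i$, in which case $u$ already held $t$ at level $i-1$ by induction and nothing in round $i$ removes it, or a transmit edge $w_{i-1} \to u_i$, in which case by induction $w$ held $t$ at level $i-1$ and we scheduled $w$ to send $t$ to $u$ in round $i$, so $u$ holds $t$ at level $i$. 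Applying this to $i=\ell$ and using the hypothesis that $P$ contains, for every $v \in T$ and every token $t \in [k]$, a distinct path ending at $v_\ell$ that originates from a source holding $t$, we conclude that every sink $v$ ends up holding every token after $\ell$ rounds.

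I do not expect any real obstacle; the only subtlety is making sure the induction treats buffer and transmit edges uniformly and that the edge-disjointness is used precisely to rule out scheduling two different tokens across the same directed transmit edge in the same round. Everything else is bookkeeping on the levels of the evolution graph.
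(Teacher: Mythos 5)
Your proposal is correct and follows essentially the same route as the paper's proof: read off the schedule from the transmit edges of each path, use edge-disjointness and unit capacities to get the per-edge bandwidth bound, and then induct on the level to show that each node on a path actually holds the associated token by the end of the corresponding round. No meaningful difference from the paper's argument.
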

\begin{proof}
For each sink $v$, let $p_v^i$ denote the path in $P$ starting at a
source containing token $i$ and ending at $v$.  We construct a
schedule in the following natural way: for each token $i$ and sink
$v$, $p_v^i$ is the schedule by which $i$ is sent from a source to
$v$.  In particular, if $(u_t, v_{t+1})$ is in $p_v^i$, then the node
$u$ sends token $i$ to $v$ in round $t$.

We need to show that this is a feasible schedule. First we observe
that two different paths in $P$ cannot use the same transmit edge
since each such edge has unit capacity.  Next we claim by induction
that if node $v_{j}$ is in $p_v^i$, then node $v$ has token $i$ by the
end of round $j$. For $j = 0$, it is trivial since path $p_v^i$ starts
from a source that has token $i$.  For $j > 0$, if $v_{j}$ is in
$p_v^i$, then the preceding edge is either a buffer edge
$(v_{j-1},v_{j})$ or a transmit edge $(u_{j-1},v_{j})$. In the former
case, by induction node $v$ has token $i$ after round $j-1$ itself. In
the latter case, node $u$ which had token $i$ after round $j-1$ by
induction was the neighbor of node $v$ in $G_j$ and $u$ sent token $i$
in round $j$ according to $p_v^i$, thus implying node $v$ has token
$i$ after round $j$. From the above claim, we conclude that whenever a
node is asked to transmit a token in round $j$, it has the token by
the end of round $j-1$. Thus the schedule we constructed is feasible.
Since $k$ paths terminate at each of the sinks, we conclude all the
tokens reach all of the sinks after round $\ell$.
\end{proof}


Lemma~\ref{lem:maxflow} provides the foundation for the following
randomized algorithm that first gathers all tokens at a random source
node and then, in $O(\log n)$ phases, disseminates these tokens to
geometrically increasing sets of nodes, until all of the nodes have
all tokens.
\fi
\begin{algorithm}[ht!]
\caption{Computing an $O((n+k)\log^2n)$-round schedule for $k$-gossip}
\label{alg:flow_based}
\begin{algorithmic}[1]
  \REQUIRE A sequence of communication graphs $G_1, G_2, \ldots$
  \ENSURE Schedule to disseminate $k$ tokens to all nodes

  \medskip

  \STATE {\bf Gather:} Send the $k$ tokens to a node $v_0$, chosen
  uniformly at random, in $n+k$ rounds. \label{alg:gather}

  \FOR{$i$ from $0$ to $\lg n$ (Phase~$i$)} 

  \STATE Choose a set $S_i$ of $2^i$ nodes uniformly at random from
  the collection of all $2^i$-size node sets. 
  
  \STATE {\bf Flow:} Send the $k$ tokens to every node in $S_i$ using
  a maximum flow in an $O((n+k)\log n)$-round evolution graph from the
  set $\{v_0\} \cup \bigcup_{j < i}S_i$ of sources to the set $S_i$ of
  sinks.  \ENDFOR
\end{algorithmic}
\end{algorithm}

\junk{
\begin{enumerate}
\item
{\bf Gather} all the $k$ tokens at a node $v_0$, chosen uniformly at
random from $V$.

\item
For $i = 0$ to $\lg n$: (Phase~$i$)
\begin{enumerate}
\item
{\bf Disseminate} all the $k$ tokens from nodes in $\{v_0\} \cup
\bigcup_{j < i}S_i$ to nodes in $S_i$.
\end{enumerate}
\end{enumerate}
}
\iflong
We first show that the gather step can be completed in $O(n + k)$ rounds.

\begin{lemma}
\label{lem:level.flow}
Let $k$ tokens be at given source nodes and $v$ be an arbitrary
node. Then, all the tokens can be gathered at $v$ in at most $n+k$
rounds.
\end{lemma}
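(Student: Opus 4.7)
}
My plan is to invoke the evolution-graph framework of Lemma~\ref{lem:maxflow} and reduce the gathering task to a max-flow computation in $\expandedG{n+k}$. Specifically, I will introduce a super-source $s$ with a directed edge from $s$ to each $u_0$ of capacity equal to the number of tokens initially held by $u$; the sink is $v_{n+k}$. A flow of value $k$ from $s$ to $v_{n+k}$ decomposes (by integrality of max-flow on a unit-capacity graph after subdividing the super-source edges) into $k$ edge-disjoint paths, each beginning at a distinct token copy at level $0$ and ending at $v_{n+k}$. Interpreting each such path as the route of one token and applying Lemma~\ref{lem:maxflow} (with the single sink $v$) then yields a feasible $(n+k)$-round gathering schedule. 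Without loss of generality I may assume $v$ holds no token initially, since tokens already at $v$ need no routing.

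The main technical step is to show $\mathrm{mincut}(s, v_{n+k}) \ge k$ in $\expandedG{n+k}$. Let $(A,B)$ be any finite cut with $s\in A$ and $v_{n+k}\in B$, and define the ``front'' $X_t := \{u \in V : u_t \in A\}$ for $t=0,\dots,n+k$. Because buffer edges have infinite capacity, a finite cut cannot separate $u_{t-1}$ from $u_t$, so the sequence $X_0\subseteq X_1\subseteq\cdots\subseteq X_{n+k}$ is nondecreasing. Since $v\notin X_{n+k}$, the total growth satisfies $\sum_{t=1}^{n+k}(|X_t|-|X_{t-1}|)\le n-1$, and therefore at least $k+1$ of the $n+k$ rounds satisfy $X_t = X_{t-1}$. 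In any such round, $X_t$ is a proper, nonempty subset of $V$ (nonempty because it contains every source node that still has an uncut super-source edge, proper because $v\notin X_t$); the connectivity of $G_t$ forces at least one edge $(u,w)\in G_t$ with $u\in X_{t-1}$ and $w\notin X_t$, whose corresponding unit-capacity transmit edge $(u_{t-1},w_t)$ must be in the cut. Adding the capacities of the cut super-source edges, which together account for the tokens held by the sources lying outside $X_0$, an easy bookkeeping argument gives cut value at least $k$; if instead every source node lies in $X_0$, the above $k+1$ forced transmit cuts already suffice.

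The one subtlety is handling the edge case in which $X_t$ ceases to contain any source of an uncut super-source edge in some intermediate round: in that case the contribution of the cut super-source edges must be combined with the per-round transmit contributions to recover the bound $k$. I expect this bookkeeping to be the only non-routine part of the argument; everything else follows standard max-flow/min-cut reasoning and the previously established Lemma~\ref{lem:maxflow}.
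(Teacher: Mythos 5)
Your plan matches the paper's proof almost exactly: add a super-source with edge capacities equal to the initial token counts, invoke max-flow/min-cut on $\expandedG{n+k}$, observe that the level-fronts $X_t$ are nondecreasing because the buffer edges have infinite capacity, and pigeonhole on stagnation rounds, each of which forces a unit-capacity transmit edge into the cut by connectivity of $G_t$. The ``subtlety'' you flag in your last paragraph actually dissolves: by monotonicity a front, once nonempty, stays nonempty, so the argument cleanly splits into two exhaustive cases---if $X_0 = \emptyset$ then every super-source edge crosses the cut and these alone contribute capacity $k$, and if $X_0 \neq \emptyset$ then every $X_t$ is a proper nonempty subset of $V$, giving growth sum $\le n-2$ and hence at least $k+2$ stagnation rounds, so the transmit edges alone already give $\ge k$---so there is no need to combine the two kinds of contributions, which is precisely the dichotomy the paper uses.
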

\begin{proof}
Following Lemma~\ref{lem:maxflow}, it suffices to show that any
evolution graph $\expandedG{n+k}$ contains $k$ edge-disjoint paths,
each starting from a source node and ending at $v$.  To prove this, we
add to $\expandedG{n+k}$ a special vertex $v_{-1}$ at level $-1$ and
connect it to every source at level $0$ by an edge of capacity
1. (Multiple edges get fused with corresponding increase in capacity
if multiple tokens have the same source.) We claim that the value of
the min-cut between $v_{-1}$ and $v_{n+k}$ is at least $k$. Before
proving this, we complete the proof of the claim assuming this.  By
the max flow min cut theorem, the max flow between $v_{-1}$ and
$v_{n+k}$ is at least $k$. Since we connected $v_{-1}$ with each of
the $k$ token sources at level $0$ by a unit capacity edge, it follows
that unit flow can be routed from each of these sources at level $0$
to $v_{n+k}$ respecting the edge capacities, establishing the desired
claim.

To prove our claimed bound on the min cut, consider any cut of the
evolution graph separating $v_{-1}$ from $v_{n+k}$ and let $S$ be the
set of the cut containing $v_{-1}$. If $S$ includes no vertex from
level $0$, we are immediately done. Otherwise, observe that if $v_{j}
\in S$ for some $0 \leq j < (n+k)$ and $v_{j+1} \notin S$, then the
value of the cut is infinite as it cuts the buffer edge of infinite
capacity out of $v_{j}$. Thus we may assume that if $v_{j} \in S$,
then $v_{j+1} \in S$. Also observe that since each of the
communication graphs $G_1, \ldots, G_{n+k}$ are connected, if the
number of vertices in $S$ from level $j+1$ is no more than the number
of vertices from level $j$ and not all vertices from level $j+1$ are
in $S$, we get at least a contribution of 1 in the value of the cut
owing to a transmit edge. But since the total number of nodes is $n$
and $v_{n+k} \notin S$, there must be at least $k$ such levels, which
proves the claim.
\end{proof}

The remainder of the proof concerns the $\lg n$ phases.  We first
establish an elementary tree decomposition lemma that is critical in
showing that there is enough capacity in any $O((n + k)\log n)$-level
evolution graph to complete each phase.

\begin{lemma}
\label{lem:tree}
For any $n$-node tree $T$ and any integer $1 \le s \le n$, there
exists an edge-disjoint partition of $T$ into subtrees $T_1$, $T_2$,
\ldots such that each $T_i$ has $\Theta(s)$ nodes, every node of $T$
is in some $T_i$, and for each $i$, at most one node in $T_i$ is in
$\bigcup_{j \neq i} T_j$.
\end{lemma}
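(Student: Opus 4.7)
The plan is to construct the partition by a post-order greedy procedure. First I would root $T$ at an arbitrary vertex $r$ and traverse $T$ in post-order, maintaining at each vertex $v$ a \emph{pending cluster} $C(v)$ consisting of $v$ together with the as-yet-unassigned descendants of $v$ that form a connected subtree rooted at $v$. When the traversal arrives at $v$, I would merge into $C(v)$ every child cluster $C(u)$ that has not already been output, along with the connecting edge $(u,v)$. Whenever $|C(v)|$ first reaches the threshold $s$, I would output $T_i := C(v) \cup \{\mathrm{parent}(v)\}$ together with the edge $(v,\mathrm{parent}(v))$ so that $\mathrm{parent}(v)$ plays the role of portal, mark the contents as assigned, and reset $C(v)$ so that ancestors do not re-absorb it.

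I would next verify the size bound. At the moment of a cut, $|C(v)|$ was strictly below $s$ before the final child cluster was merged in, and that final cluster has size at most $s-1$ (else it would have triggered its own cut), so $|T_i| \in [s, 2s]$. At a high-degree vertex $v$ whose pending children's clusters together overshoot $2s$, I would group them greedily into bundles of total size in $[s, 2s]$ and issue one cut per bundle, each bundle combined with $v$ and the edge to $\mathrm{parent}(v)$ as its portal edge, yielding several subtrees each of size $\Theta(s)$ that share $\mathrm{parent}(v)$ as portal. Edge-disjointness and vertex coverage are then immediate: every edge of $T$ is assigned to exactly one $T_i$ at the moment of its upward traversal, and any residue pending at the root after all processing is absorbed into the most recent $T_i$, enlarging it by only $O(s)$ nodes.

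The main obstacle, and the step I would develop most carefully, is the one-shared-node condition. Each $T_i$ comes equipped with a natural distinguished portal, namely either $\mathrm{parent}(v)$ or a shared hub $v$ at which multiple bundles were cut, and all other vertices of $T_i$ are either interior descendants that were never previously assigned or are $v$ itself. The delicate case is a vertex $w$ that was the portal of an earlier cut and then later participates in a cut happening at an ancestor of $w$: if $w$ ended up interior to that ancestor's output, $T_i$ would acquire a second shared vertex. To prevent this I would enforce the rule that whenever the pending cluster rooted at a previously-designated portal vertex $w$ would itself reach size $s$, the next cut is rooted at $w$, with $w$ (rather than $\mathrm{parent}(w)$) serving as the portal of the new output. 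A short post-order induction then maintains the invariant ``every vertex is the portal of at most one direction of sharing,'' which is exactly what guarantees each $T_i$ has at most one vertex in $\bigcup_{j\neq i} T_j$, completing the proof.
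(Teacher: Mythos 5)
Your bottom--up greedy sweep is a genuinely different route from the paper's argument, which is a top--down induction that repeatedly peels off the lowest vertex $v$ with $|T_v|\ge s$ and recurses on $T - T_v$. Both approaches produce $\Theta(s)$-size subtrees, but there is a concrete gap in the sharing argument of your version, and the fix you outline does not close it.

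The source of the problem is the decision to include $\mathrm{parent}(v)$ in the output piece as a ``portal.'' This creates \emph{vertical} sharing (between a piece and the piece that later contains its portal), and vertical sharing cascades. Your proposed repair only fires when $C(w)$ for a designated portal $w$ itself reaches size $s$; it says nothing about the more typical situation where $C(w)$ stays small and is absorbed upward, nor about the situation where two different portals end up inside the same pending cluster. A minimal counterexample for the second failure mode: take $s=3$ and a root $v$ with two children $u_1,u_2$, each the top of a $3$-vertex path ($u_1\!-\!a_1\!-\!a_2\!-\!a_3$ and $u_2\!-\!b_1\!-\!b_2\!-\!b_3$). Post-order processing cuts $T_1=\{a_1,a_2,a_3,u_1\}$ with portal $u_1$, then $T_2=\{b_1,b_2,b_3,u_2\}$ with portal $u_2$; at the root the pending cluster is $C(v)=\{v,u_1,u_2\}$ of size $s$, containing \emph{two} portals, and the resulting piece $T_3$ shares $u_1$ with $T_1$ and $u_2$ with $T_2$, i.e.\ two shared nodes. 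Even the path $1\!-\!2\!-\cdots-\!10$ with $s=3$ breaks your stated rule: $C(4)=\{4\}$ never reaches $s$, so the rule never triggers, and $T_2=\{4,5,6,7\}$ has both $4$ and $7$ as shared vertices.

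The paper avoids this entirely by never including $\mathrm{parent}(v)$ in a piece, so sharing is only ever \emph{horizontal}: pick the bottom-most $v$ with $n_v\ge s$ and all children's subtrees $<s$. If $n_v\le 2s$, output $T_v$ itself (disjoint from everything else; the edge $(v,\mathrm{parent}(v))$ is simply dropped). If $n_v>2s$, partition the children of $v$ into groups $g$ with $s\le 1 + \sum_{w\in g}n_w\le 2s$ and output $T(g)=\{v\}\cup\bigcup_{w\in g}T_w$; these pieces share only the single hub $v$, and $v$ does not reappear in the recursion on $T-T_v$. That invariant---a shared vertex is used once and then eliminated from the remaining tree---is what your portal scheme lacks. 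If you want to keep the greedy sweep, the simplest repair is to cut $C(v)$ as a piece rooted at $v$ (do not attach $\mathrm{parent}(v)$), handle the high-degree case by bundling siblings with $v$ as a hub, and mark $v$ as consumed so it never reenters a pending cluster above.
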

\begin{proof}
The proof is by induction on the size of $T$.  The base case $n = 1$
is trivial.  We now consider the induction step.  Arbitrarily root the
tree $T$ at a node $r$.  For any node $v$, let $T_v$ denote the
subtree rooted at node $v$; let $n_v = |T_v$.  Thus, $n_r = n$.  Let
$v$ denote an arbitrary node such that $n_v \ge s$ and for every child
$w$ of $v$, $n_w < s$.  We first consider the case $n_v \le 2s$.  By
the induction hypothesis, there exist edge-disjoint subtrees of $T -
T_v$ such that each subtree has $\Theta(s)$ edges, every node of $T -
T_v$ is in some subtree, and any two subtrees share at most one node.
Adding $T_v$ to this collection of subtrees yields the desired claim
for $T$.

We now consider the case where $n_v > 2s$.  Here we consider two
subcases.  The first subcase is where either $v$ is the root or $|T -
T_v| \ge s$.  We partition the children of $v$ into a set $X$ of
groups such that for each group $g \in X$, $s \le 1 + \sum_{w \in g}
n_w \le 2s$.  Let $T(g)$ denote the tree $\{v\} \cup \bigcup_{w \in g}
T_w$.  All of these subtrees are edge-disjoint and any pair of
subtrees share at most one node ($v$).  If $v$ is the root, then we
have established the desired property for $T$.  Otherwise, since $|T -
T_v| \ge s$, by the induction hypothesis, there exist edge-disjoint
subtrees of $T - T_v$ such that each subtree has $\Theta(s)$ edges,
every node of $T - T_v$ is in some subtree, and for any subtree, at
most one node in the subtree is in any of the other subtrees.  Adding
the trees $T(g)$ to this collection of subtrees yields the desired
claim for $T$.  

The second subcase is where $0 < |T - T_v| < s$.  In this subcase, we
make the parent of $v$ as the child of $v$ and proceed to the first
subcase, thus establishing the desired claim and completing the
induction step.
\end{proof}

The set of sources at the start of phase $i$ is $\Sources{i} = \{v_0\}
\cup \bigcup_{j < i} S_j$.  We next place a lower bound on the size of
$\Sources{i}$.

\begin{lemma}
\label{lem:sources}
For each $i$, $0 \le i \le \lg n$, $|\Sources{i}|$ is at least
$\min\{1, 2^{i-2}\}$ with probability at least $1 - 1/n^3$;
furthermore, $\Sources{i}$ is drawn uniformly at random from the
collection of all $|\Sources{i}|$-node sets.
\end{lemma}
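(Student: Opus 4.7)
My plan is to prove the size bound and the distributional statement separately, since the former is essentially immediate while the latter carries the real content of the lemma.

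For the size bound, I would simply observe that by construction $\Sources{i} = \{v_0\} \cup \bigcup_{j < i} S_j \supseteq S_{i-1}$ whenever $i \ge 1$. Since $|S_{i-1}| = 2^{i-1} \ge 2^{i-2}$, the claimed lower bound on $|\Sources{i}|$ holds deterministically for $i \ge 1$; the case $i = 0$ is trivial because $\Sources{0} = \{v_0\}$ has size $1$. So no probabilistic argument is actually needed for this part, and the $1/n^3$ slack in the statement is presumably just a budget that will be consumed by a union bound over the $O(\log n)$ phases and $n$ nodes in the subsequent analysis.

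For the uniform-random assertion, my plan is a symmetry argument. Because $v_0$ is drawn uniformly from $V$ and each $S_j$ is drawn independently and uniformly from the collection of all $2^j$-subsets of $V$, for every permutation $\pi$ of $V$ the joint distribution of the tuple $(v_0, S_0, \ldots, S_{i-1})$ equals that of $(\pi(v_0), \pi(S_0), \ldots, \pi(S_{i-1}))$. Since the union operator commutes with relabeling, $\Sources{i}$ is a permutation-equivariant function of this tuple, so the distribution of $\Sources{i}$ itself is invariant under the action of $\mathrm{Sym}(V)$.

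To finish, I would promote permutation invariance to conditional uniformity. Given any two $m$-subsets $T, T' \subseteq V$, choose a permutation $\pi$ with $\pi(T) = T'$; invariance then yields $\Pr[\Sources{i} = T] = \Pr[\Sources{i} = T']$. Hence, conditioned on $|\Sources{i}| = m$, the set $\Sources{i}$ is uniform over all $m$-subsets of $V$, which is exactly the second conclusion of the lemma. I do not expect any real obstacle beyond a brief check that the union operator is equivariant, so the proof is essentially a one-line invariance observation once the setup is laid out.
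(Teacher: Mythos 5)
Your proof is correct, and in fact cleaner than the one given in the paper. Taking the size bound first: you observe that $\Sources{i} \supseteq S_{i-1}$ and $|S_{i-1}| = 2^{i-1} \ge 2^{i-2}$, so the bound holds \emph{deterministically} for $i \ge 1$ (and trivially for $i=0$); the stated ``$\min$'' is evidently a typo for ``$\max$,'' which you correctly read through. The paper, by contrast, never exploits the containment of the most recent set $S_{i-1}$; it instead lower-bounds $|\bigcup_{j<i} S_j|$ by controlling how often a single node can land in many of the $S_j$'s (for small $i$, a direct overlap count plus a union bound; for large $i$, a computation of $\mathbb{E}[|\Sources{i}|]$ followed by a Chernoff/Azuma concentration argument). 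That is strictly more work for a weaker conclusion, and the $1 - 1/n^3$ slack in the lemma statement is only there to accommodate it; your deterministic bound renders that slack unnecessary. On the second assertion, the paper's proof is actually silent---it treats the uniformity of $\Sources{i}$ (conditioned on its size) as self-evident---whereas you supply the missing argument: $v_0$ and the $S_j$'s are independent and each individually permutation-invariant, so the joint law of $(v_0, S_0, \ldots, S_{i-1})$ is $\mathrm{Sym}(V)$-invariant; the union map is $\mathrm{Sym}(V)$-equivariant; hence $\Sources{i}$ has a permutation-invariant (exchangeable) law, and transitivity of $\mathrm{Sym}(V)$ on $m$-subsets gives conditional uniformity. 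That is exactly the right justification and is a genuine improvement over the paper, which leaves the distributional claim unproved even though it is the part that Lemma~\ref{lem:double} actually leans on.
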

\begin{proof}
For $i \le \lg\lg n$, we calculate the
probability, for each $v$, that there exist more than four values of
$j$ for which $S_j$ contains $v$ as at most
\[
\binom{\lg n}{5} n \frac{1}{n^5} \le \frac{1}{n^3}.
\]
Thus, the size of the given set is at least $2^i/4 = 2^{i-2}$ with
probability at least $1 - 1/n^3$.  We now consider the case $i >
\lg\lg n$.  Let $X_v$ denote the indicator variable for node $v$ to
be in the set.  Then,
\[ E[X_v] = 1 - (1 - 1/n)\prod_{0 \le j < i}(1 - 2^j/n) \ge 1 - e^{-1/n - \sum_{j < i} 2^j/n} = 1 - e^{-2^{i}/n} \ge 4\cdot 2^{i}/(7n).\]
Thus, the expected size of the set is at least $2^{i-1}$.  Now, using
a Chernoff-type argument (e.g., by using the method of bounded
differences and invoking Azuma's inequality), we obtain the size of
the set is at least $2^{i-2}$ whp.
\end{proof}

\begin{lemma}
\label{lem:double}
Let $r \le n$ be an arbitrary integer.  Let $\Srcs$ denote a set of at
least $r/4$ sources and $\Snks$ a set of $r$ sinks, each set drawn
independently and uniformly at random from $V$.  Then, with high
probability, the evolution graph $\expandedG{\ell}$ with $\ell =
\Theta((n+k)\log n)$ contains $rk$ edge-disjoint paths, each path
starting from a source and ending at a sink, and each sink having
exactly $k$ paths ending at it.
\end{lemma}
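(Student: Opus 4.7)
The plan is to reduce the lemma to a max-flow calculation on the evolution graph. I augment $\expandedG{\ell}$ with a super-source $s^\ast$ attached by an infinite-capacity edge to each copy $v_0$ of a source $v \in \Srcs$, and a super-sink $t^\ast$ reached from each copy $v_\ell$ of a sink $v \in \Snks$ by an edge of capacity exactly $k$. Any integral $s^\ast$--$t^\ast$ flow of value $rk$ saturates every sink-to-$t^\ast$ edge and therefore decomposes into $rk$ edge-disjoint paths with exactly $k$ ending at each sink. By the max-flow/min-cut theorem, it thus suffices to show that with high probability over the choice of $\Srcs$ and $\Snks$ the minimum $s^\ast$--$t^\ast$ cut has capacity at least $rk$.

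Exactly as in the proof of Lemma~\ref{lem:level.flow}, any finite-capacity cut is characterised by a monotone nested sequence $A_0 \subseteq A_1 \subseteq \cdots \subseteq A_\ell$ of subsets of $V$ with $\Srcs \subseteq A_0$: the infinite-capacity buffer edges enforce monotonicity and the infinite-capacity source edges enforce $\Srcs \subseteq A_0$. Its capacity equals $k\,|\Snks \cap A_\ell| + \sum_{t=0}^{\ell-1} e_t$, where $e_t$ counts edges of $G_{t+1}$ having one endpoint in $A_t$ and the other outside $A_{t+1}$. Setting $r' := |\Snks \setminus A_\ell|$, the sink contribution is $k(r-r')$, so the remaining task is to show $\sum_t e_t \ge kr'$.

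To lower-bound $\sum_t e_t$ I plan to apply Lemma~\ref{lem:tree} in each round: partition a spanning tree of $G_{t+1}$ into edge-disjoint subtrees of a suitably chosen size $s$ (of order $(n+k)/r$ up to a logarithmic factor), giving $\Theta(n/s)$ subtrees. Any subtree containing both a node of $A_t$ and a node of $V \setminus A_{t+1}$ contributes at least one edge to $e_t$ because it is connected. The randomness of $\Srcs$ and $\Snks$ enters here: a subtree of size $s$ whp contains a source, hence a node of $A_0 \subseteq A_t$, and also an unreached sink, hence a node of $V \setminus A_\ell \subseteq V \setminus A_{t+1}$, provided $|V \setminus A_\ell|$ is not vanishingly small. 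A constant fraction of subtrees therefore straddle, giving $e_t = \Omega(n/s)$ in every round; summing over $\ell = \Theta((n+k)\log n)$ rounds and choosing $s$ appropriately yields $\sum_t e_t = \Omega(kr'\log n)$, with a $\log n$ factor of slack.

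The principal obstacle will be the union bound over the enormous family of admissible monotone sequences $(A_t)$. My plan is to parametrise cuts by a compact profile (essentially the sizes $|A_t|$ at polynomially many checkpoints), crudely bound the number of profiles, and argue via Chernoff-type concentration that for each profile the probability---over the random $(\Srcs,\Snks)$---of violating $\sum_t e_t \ge kr'$ is $\exp(-\Omega(n+k))$, which dominates the profile count thanks to the accumulated $\log n$ slack. A separate easy case handles the regime where $|V \setminus A_\ell|$ is very small: there $r' = o(r)$ whp by concentration of $|\Snks \cap (V \setminus A_\ell)|$, and the sink-edge contribution $k(r-r')$ alone already exceeds $rk$, making any nonnegative transmit contribution sufficient.
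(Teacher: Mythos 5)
Your overall framing---augment with super-source and super-sink, observe that finite cuts correspond to monotone nested sequences $A_0 \subseteq \cdots \subseteq A_\ell$ with $\Srcs \subseteq A_0$, split the cut value into a sink-edge contribution $(r-r')k$ and a transmit-edge contribution, and use Lemma~\ref{lem:tree} to extract straddling subtrees---matches the paper's strategy through the first two-thirds. But the way you propose to close the argument has a genuine gap, and a couple of intermediate claims as written are false.

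The central issue is your handling of the union bound over cuts. You propose to parametrize cuts by coarse profiles (the sizes $|A_t|$ at checkpoints), bound the number of profiles, and argue per-profile concentration of order $\exp(-\Omega(n+k))$. This does not go through: the probability that a given cut has low capacity is driven by whether the random sets $\Srcs$ and $\Snks$ ``hit'' the subtrees in the right way, and those sets have only $O(r)$ random elements, so Chernoff gives you at best $\exp(-\Omega(r))$ per event. Meanwhile the number of monotone nested sequences is not controlled by profiles of sizes alone (two sequences with the same size profile can differ in which nodes they contain, which changes the probability). The paper sidesteps the union bound over cuts entirely: it first establishes a \emph{deterministic structural property}---for every level $i$ and every subtree $T_i^j$ in the Lemma~\ref{lem:tree} decomposition with $s = (n\log n)/r$, the subtree contains $\Omega(\log n)$ nodes of $\Srcs$ and $\Theta(\log n)$ nodes of $\Snks$---which requires a union bound over only polynomially many (level, subtree) pairs. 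Once that event holds, the cut lower bound follows \emph{deterministically for every cut simultaneously}: any adversarial set $\Snks'$ of $r'$ unreached sinks must span $\Omega(r'/\log n)$ distinct subtrees (because each subtree holds only $\Theta(\log n)$ sinks), each of those subtrees also contains a source, and on each of the $\Omega(k\log n)$ ``static'' levels where $V_i = V_{i+1}$ each such subtree contributes a distinct crossing transmit edge. This is the key idea missing from your sketch.

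Two smaller but real problems in the middle of your argument: you claim ``a constant fraction of subtrees therefore straddle, giving $e_t = \Omega(n/s)$ in every round.'' That is wrong---$\Snks'$ is chosen adversarially by the cut, not at random, so you cannot conclude a constant fraction of subtrees contain a sink of $\Snks'$. What you actually get, via the $\Theta(\log n)$-sinks-per-subtree property, is $\Omega(r'/\log n)$ straddling subtrees, i.e., $e_t = \Omega(r'/\log n)$, and it is essential that this scales with $r'$ so it can be traded off against the $(r-r')k$ sink-edge term. Your ``easy case'' at the end (``the sink-edge contribution $k(r-r')$ alone already exceeds $rk$'') is arithmetically impossible whenever $r' > 0$. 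Finally, you omit the regime $r \le \lg n$, where the Chernoff-based subtree-hitting property fails and the paper instead falls back on Lemma~\ref{lem:level.flow} directly.
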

\begin{proof}
We add a super-source having edges of capacity $rk$ to each source and
a super-sink with edges of capacity $k$ from each sink.  It thus
suffices to prove that the maximum flow from the super-source to the
super-sink is at least $rk$.  For $r \le \lg n$, we invoke
Lemma~\ref{lem:level.flow} to obtain that the maximum flow is at least
$rk$.  In the remainder of this proof, we assume $r \ge \lg n$.  We
show that with high probability, the capacity of every cut is at least
$rk$.  Note that since there are an exponentially large number of cuts
to consider, it may not be sufficient to establish a high probability
bound for each cut separately.  We address this challenge by
identifying an important property that holds for $\expandedG{\ell}$
that enables the capacity bound to hold for all cuts simultaneously.

Consider graph $G_i$ with the source and sink sets $\Srcs$ and
$\Snks$.  Recall that $\Srcs$ and $\Snks$ are drawn uniformly at
random from the collection of all $|\Srcs|$-node and $|\Snks|$-node
sets, respectively, and $\Snks'$ is an arbitrary subset of $\Snks'$ of
size $r'$.  By Lemma~\ref{lem:tree} applied to a spanning tree of
$G_i$ with parameter $s = (n\log n)/r$, there exist edge-disjoint
subtrees $T_i^1$, $T_i^2$, \ldots, each having $\Theta(s)$ edges from
the spanning tree, and together containing all of the nodes in $V$.
Furthermore, for each $T_i^j$, at most one of its nodes is present in
the other subtrees.  Since $\Srcs$ and $\Snks$ are drawn at random and
have are of size at least $r/4$ and equal to $r$, respectively, it
follows from a standard Chernoff bound that each of these subtrees has
$\Omega(\log n)$ (resp., $\Theta(\log n)$), nodes from $\Srcs$ (resp.,
$\Snks$) whp.  In the remainder of the proof, we thus assume that the
preceding property holds for each of the graphs in the
$\Theta((n+k)\log n)$ levels of $\expandedG{\ell}$.

We now argue that every cut $C = (\Srcs, \Snks)$ of $\expandedG{\ell}$ has
capacity at least $rk$.  If any of the sources in $\Srcs$ is separated
from the super-source, then the capacity of the cut is at least $rk$
since the capacity of the edge connecting the super-source to any
source is $rk$.  So in the remainder, we assume that all nodes in $S$
are on the same side of the cut as the super-source.  Let $\Snks'$
denote the set of sinks that are separated from the super-source in
$C$; let $r' = |\Snks'|$.  All of the edges from $\Snks - \Snks'$ to
the super-sink cross $C$ and have a total capacity of $(r - r')k$.  It
thus remains to show that the total capacity of the edges crossing the
cut in the intermediate levels $1$ through $t$ is at least $r'k$.

Let $V_i$ denote the set of nodes in level $i$ that are in $\Srcs$.
Since every parallel edge has infinite capacity, we have $V_{i+1}
\supseteq V_i$.  Since each $V_i$ is of size at most $n$, there are at
least $t - n$ levels such that $V_{i+1} = V_i$.  For any such level
$i$, $C$ includes all edges that separate $\Srcs$ from $\Snks'$ in the
graph $G_i$.  By the property established above, there exist
edge-disjoint partition of a spanning tree of $G_i$ that such that
each tree in the partition contains $\Theta(\log n)$ nodes from both
$\Srcs$ and $\Snks$.  Therefore, for any arbitrary subset $\Snks'$ of
size $r'$, we can find $\Omega(r'/\log n)$ edges that separate
$\Snks'$ from $\Srcs$.  For the number of levels exceeding
$\Omega(k\log n)$, it then follows that the total capacity of the
edges crossing the cut in the intermediate levels is at least $r'k$.
This establishes the desired lower bound on the capacity of the cut,
completing the proof of the lemma.
\end{proof}

\tOfflineMultiport
\begin{proof}
By Lemma~\ref{lem:level.flow}, the gather step completes in $O(n + k)$
rounds.  We now argue that each phase completes in $O((n + k)\log n)$
rounds whp.  By Lemma~\ref{lem:sources}, the number of sources at the
start of phase $i$ is at least $2^{i-2}$ whp.  By
Lemmas~\ref{lem:maxflow} and~\ref{lem:double}, the number of rounds
needed for phase $i$ is $O((n + k)\log n)$ whp.  Since the number of
phases is $\lg n$, the statement of the theorem follows.
\end{proof}
\fi

\newcommand{\bexpandedG}[1]{\widetilde{G}[#1]}

\subsection{An $O(\min\{n\sqrt{k\log n}, nk\})$-round broadcast schedule}
\label{sec:upper}
\iflong We extend the notion of the evolution graph to the broadcast
model.  The primary difference is the addition of a new level of nodes
and edges for every round that enforces the broadcast constraint.

\smallskip
\noindent
{\em Evolution graph}: Let $V$ be the set of nodes. Consider a dynamic
network of $l$ rounds numbered $1$ through $l$ and let $G_i$ be the
communication graph for round $i$. The evolution graph for this
network is a directed capacitated graph $\bexpandedG{2l+1}$ with
$2l+1$ levels constructed as follows. We create $2l+1$ copies of $V$
and call them $V_0, V_1, V_2, \dots, V_{2l}$. $V_i$ is the set of
nodes at level $i$ and for each node $v$ in $V$, we call its copy in
$V_i$ as $v_i$. For $i = 1, \ldots, l$, level $2i-1$ corresponds to
the beginning of round $i$ and level $2i$ corresponds to the end of
round $i$. Level $0$ corresponds to the network at the start. Note
that the end of a particular round and the start of the next round are
represented by different levels. There are three kinds of edges in the
graph. First, for every round $i$ and every edge $(u,v) \in G_i$, we
place two directed edges with unit capacity each, one from $u_{2i-1}$
to $v_{2i}$ and another from $v_{2i-1}$ to $u_{2i}$. We call these
edges {\em broadcast edges} as they will correspond to broadcasting of
tokens; the unit capacity on each such edge will ensure that only one
token can be sent from a node to a neighbor in one round. Second, for
every node $v$ in $V$ and every round $i$, we place an edge with
infinite capacity from $v_{2(i-1)}$ to $v_{2i}$. We call these edges
{\em buffer edges} as they ensure tokens can be stored at a node from
the end of one round to the end of the next. Finally, for every node
$v \in V$ and every round $i$, we also place an edge with unit
capacity from $v_{2(i-1)}$ to $v_{2i-1}$. We call these edges as {\em
  selection edges} as they correspond to every node selecting a token
out of those it has to broadcast in round $i$; the unit capacity
ensures that in a given round a node must send the same token to all
its neighbors. Figure \ref{fig:evolution_broadcast} illustrates our
construction, and Lemma~\ref{lem:level.steiner} explains its
usefulness.

\begin{figure}[ht]
\begin{center}
\includegraphics[width=5in]{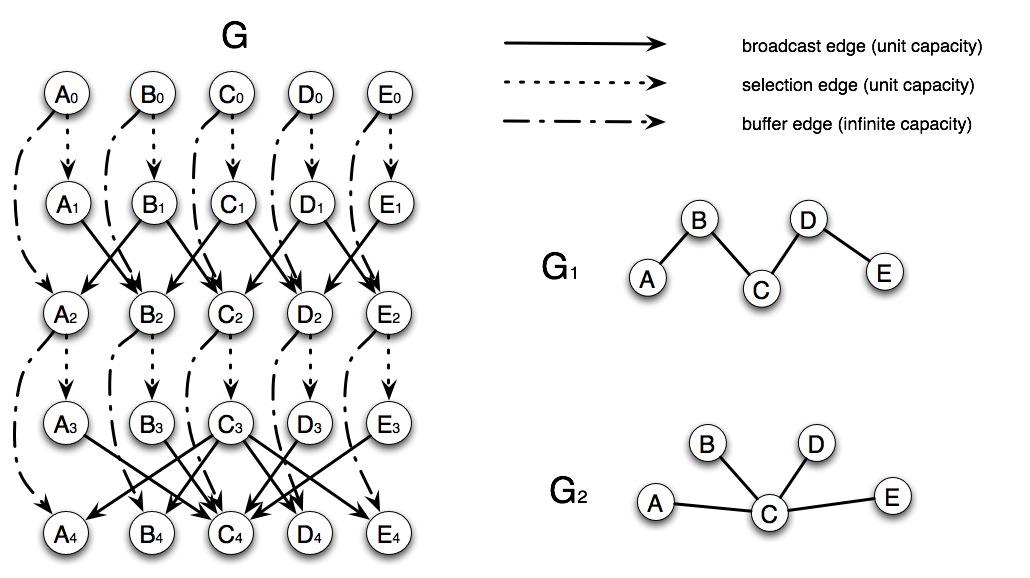}
\caption{An example of how to construct the evolution graph, for
  broadcast schedules, from a sequence of communication graphs.}
\label{fig:evolution_broadcast}
\end{center}
\end{figure}

\begin{lemma}
\label{lem:level.steiner}
Let there be $k$ tokens, each with a source and a set of
destinations. It is feasible to send all the tokens to all of their
destinations using $l$ rounds, where every node broadcasts only one
token in each round, iff $k$ directed Steiner trees can be packed in
$\bexpandedG{2l + 1}$ levels, one for each token with its root being
the copy of the source at level $0$ and its terminals being the copies
of the destinations at level $2l$.
\end{lemma}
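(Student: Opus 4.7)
The plan is to prove both directions by showing a correspondence between scheduling decisions (which token each node broadcasts in each round) and the use of selection and broadcast edges, with token memory captured by buffer edges. The unit capacities on selection and broadcast edges will translate edge-disjointness of the tree packing exactly into the single-token-per-round broadcast constraint, and the infinite capacity on buffer edges will correctly allow a token to be held by many different trees' memories simultaneously.

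For the ($\Rightarrow$) direction, assume a feasible $l$-round schedule. For each token $t$ with source $s$ and destination set $D_t$, I would construct a tree $T_t$ rooted at $s_0$ as follows: include the buffer edge $v_{2(i-1)} \to v_{2i}$ whenever $v$ has $t$ at the end of round $i-1$ and it is not the first round in which $v$ obtains $t$; include the selection edge $v_{2(i-1)}\to v_{2i-1}$ whenever $v$ broadcasts $t$ in round $i$; and for each node $v$ that first acquires $t$ at the end of round $i$, pick exactly one neighbor $u$ that broadcasts $t$ to $v$ in round $i$ and include the broadcast edge $u_{2i-1}\to v_{2i}$. A straightforward induction on $i$ shows that $T_t$ is a directed tree rooted at $s_0$ reaching $d_{2l}$ for every $d\in D_t$ (each non-root vertex has in-degree exactly one by construction). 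For edge-disjointness across different tokens, a selection edge at $v$ in round $i$ sits in $T_{t}$ only if $v$ broadcasts $t$ in round $i$, and since the schedule allows only one such token, selection edges are used by at most one tree; broadcast edges inherit the same property because they are only used in $T_t$ when the incident selection edge is also used in $T_t$.

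For the ($\Leftarrow$) direction, given the packed Steiner trees $T_1,\dots,T_k$, define the schedule by: if selection edge $v_{2(i-1)}\to v_{2i-1}$ lies in $T_t$, then $v$ broadcasts $t$ in round $i$. Edge-disjointness of selection edges guarantees at most one broadcast per node per round. To show feasibility, I need that whenever $v$ is scheduled to broadcast $t$ in round $i$, it already holds $t$ at the start of round $i$. Since $T_t$ is rooted at $s_0$ and contains $v_{2i-1}$, there is a directed path in $T_t$ from $s_0$ to $v_{2(i-1)}$; by induction over $i$, existence of such a path implies $v$ has received $t$ by the end of round $i-1$, using only (valid) broadcasts at the scheduled rounds and the buffer edges (which just keep tokens in memory). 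The same induction, applied to the paths from $s_0$ to the destination copies $d_{2l}$, shows that every destination $d\in D_t$ holds $t$ at the end of round $l$.

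The main obstacle is the careful bookkeeping that separates the three edge types and ensures the Steiner-tree structure (in-degree one at every non-root vertex) exactly mirrors how a token is either carried forward from the previous round or freshly received from a single broadcasting neighbor. Once that correspondence is set up cleanly, both directions reduce to straightforward inductions on the round index.
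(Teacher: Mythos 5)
Your proof is correct and follows essentially the same approach as the paper's: you set up the same correspondence between scheduling decisions (which token each node broadcasts) and the selection/broadcast edges, with buffer edges encoding memory, and argue both directions by induction on the round index. The only minor difference is in the $(\Rightarrow)$ direction: the paper constructs each Steiner tree $T_t$ backwards from the destination copies at level $2l$, yielding a minimal tree containing only vertices actually needed to reach the destinations, whereas you build $T_t$ forward in time, spanning every node that ever holds token $t$; both constructions yield a valid edge-disjoint packing (extra vertices are harmless in a Steiner tree), so the two are equivalent for the purposes of the lemma.
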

\begin{proof}
Assume that $k$ tokens can be sent to all of their destinations in $l$
rounds and fix one broadcast schedule that achieves this. We will
construct $k$ directed Steiner trees as required by the lemma based on
how the tokens reach their destinations and then argue that they all
can be packed in $\bexpandedG{2l+1}$ respecting the edge
capacities. For a token $i$, we construct a Steiner tree $T^i$ as
follows.  For each level $j \in \{0, \ldots, 2l\}$, we define a set
$S^i_j$ of nodes at level $j$ inductively starting from level $2l$
backwards.  $S^i_{2l}$ is simply the copies of the destination nodes
for token $i$ at level $2l$. Once $S^i_{2(j+1)}$ is defined, we define
$S^i_{2j}$ (respectively $S^i_{2j+1}$) as: for each $v_{2(j+1)} \in
S^i_{2(j+1)}$, include $v_{2j}$ (respectively nothing) if token $i$
has reached node $v$ by round $j$, or include a node $u_{2j}$
(respectively $u_{2j+1}$) such that $u$ has token $i$ at the end of
round $j$ which it broadcasts in round $j+1$ and $(u,v)$ is an edge of
$G_{j+1}$. Such a node $u$ can always be found because whenever
$v_{2j}$ is included in $S^i_{2j}$, node $v$ has token $i$ by the end
of round $j$ which can be proved by backward induction staring from $j
= l$. It is easy to see that $S^i_0$ simply consists of the copy of
the source node of token $i$ at level $0$. $T^i$ is constructed on the
nodes in $\cup_{j = 0}^{j = 2l} S^i_j$. If for a vertex $v$,
$v_{2(j+1)} \in S^i_{2(j+1)}$ and $v_{2j} \in S^i_{2j}$, we add the
buffer edge $(v_{2j},v_{2(j+1)})$ in $T^i$. Otherwise, if $v_{2(j+1)}
\in S^i_{2(j+1)}$ but $v_{2j} \notin S^i_{2j}$, we add the selection
edge $(u_{2j},u_{2j+1})$ and broadcast edge $(u_{2j+1},v_{2(j+1)})$ in
$T^i$, where $u$ was the node chosen as described above. It is
straightforward to see that these edges form a directed Steiner tree
for token $i$ as required by the lemma which can be packed in
$\bexpandedG{2l+1}$. The argument is completed by noting that any unit
capacity edge cannot be included in two different Steiner trees as we
started with a broadcast schedule where each node broadcasts a single
token to all its neighbors in one round, and thus all the $k$ Steiner
trees can be simultaneously packed in $\bexpandedG{2l+1}$ respecting
the edge capacities.

Next assume that $k$ Steiner trees as in the lemma can be packed in
$\bexpandedG{2l+1}$ respecting the edge capacities. We construct a
broadcast schedule for each token from its Steiner tree in the natural
way: whenever the Steiner tree $T_i$ corresponding to token $i$ uses a
broadcast edge $(u_{2j-1},v_{2j})$ for some $j$, we let the node $u$
broadcast token $i$ in round $j$. We need to show that this is a
feasible broadcast schedule. First we observe that two different
Steiner trees cannot use two broadcast edges starting from the same
node because every selection edge has unit capacity, thus there are no
conflicts in the schedule and each node is asked to broadcast at most
one token in each round. Next we claim by induction that if node
$v_{2j}$ is in $T^i$, then node $v$ has token $i$ by the end of round
$j$. For $j = 0$, it is trivial since only the copy of the source node
for token $i$ can be included in $T^i$ from level $0$. For $j > 0$, if
$v_{2j}$ is in $T^i$, we must reach there by following the buffer edge
$(v_{2(j-1)},v_{2j})$ or a broadcast edge $(u_{2j-1},v_{2j})$. In the
former case, by induction node $v$ has token $i$ after round $j-1$
itself. In the latter case, node $u$ which had token $i$ after round
$j-1$ by induction was the neighbor of node $v$ in $G_j$ and $u$
broadcast token $i$ in round $j$, thus implying node $v$ has token $i$
after round $j$. From the above claim, we conclude that whenever a
node is asked to broadcast a token in round $j$, it has the token by
the end of round $j-1$. Thus the schedule we constructed is a feasible
broadcast schedule. Since the copies of all the destination nodes of a
token at level $2l$ are the terminals of its Steiner tree, we conclude
all the tokens reach all of their destination nodes after round $l$.
\end{proof}

\begin{figure}[ht]
\begin{center}
\includegraphics[width=5in]{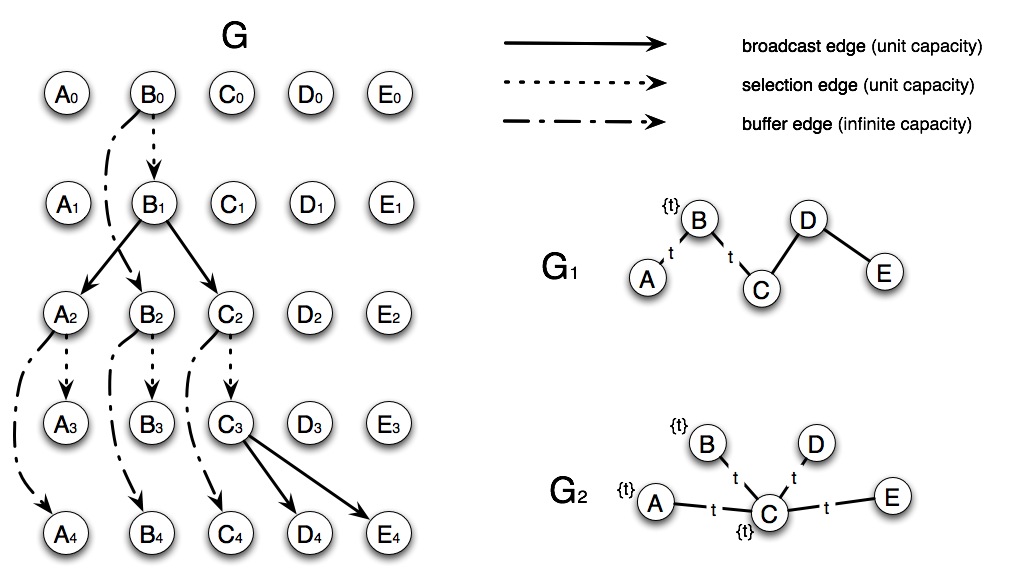}
\caption{An example of building directed Steiner tree in the evolution
  graph based on token dissemination process. Token $t$ starts
  from node $B$. Thus, the Steiner tree is rooted at $B_0$ in
  $G$. Since $B_0$ has token $t$, we include the infinite capacity
  buffer edge $(B_0,B_2)$. In the first round, node $B$ broadcasts
  token $t$, and hence we include the selection edge
  $(B_0,B_1)$. Nodes $A$ and $C$ receive token $t$ from $B$ in the
  first round, so we include edges $(B_1,A_2)$, $(B_1,C_2)$. Now
  $A_2$, $B_2$, and $C_2$ all have token $t$. Therefore we include the
  edges $(A_2,A_4)$, $(B_2,B_4)$, and $(C_2,C_4)$. In the second
  round, all of $A$, $B$, and $C$ broadcast token $t$, we include
  edges $(A_2,A_3)$, $(B_2,B_3)$, $(C_2,C_3)$. Nodes $D$ and $E$
  receive token $t$ from $C$. So we include edges $(C_3,D_4)$ and
  $(C_3,E_4)$. Notice that nodes $A$ and $B$ also receive token $t$
  from $C$, but they already have token $t$. Thus, we don't include
  edges $(C_3,B_4)$ or $(C_3,A_4)$.}
\label{fig:steiner}
\end{center}
\end{figure}

Our algorithm is given in Algorithm~\ref{alg:broadcast_flow_based} and
analyzed in Lemma~\ref{lem:level.flow} and
Theorem~\ref{thm:offline_broadcast}.

\begin{algorithm}[ht!]
\caption{$O(\min\{n \sqrt{k\log n}, nk\})$ round algorithm in the
  offline model}
\label{alg:broadcast_flow_based}
\begin{algorithmic}[1]
  \REQUIRE A sequence of communication graphs $G_i$, $i = 1, 2, \ldots$
  \ENSURE Schedule to disseminate $k$ tokens.

  \medskip

  \IF{$k \leq \sqrt{\log n}$}

  \FOR{each token $t$} \label{alg.step:flow_based.trivial}

  \STATE For the next $n$ rounds, let every node that has token
  $t$ broadcast the token.

  \ENDFOR 

  \ELSE

  \STATE Choose a set $S$ of $2\sqrt{k \log n}$ random nodes. \label{alg.step:random}
  
  \FOR{each vertex in $v \in S$} \label{alg.step:flow_based.phase_1}

  \STATE Send each of the $k$ tokens to vertex $v$ in $O(n)$ rounds. 

  \ENDFOR

  \FOR{each token $t$} \label{alg.step:flow_based.phase_2}

  \STATE For the next $2n \sqrt{(\log n)/k}$ rounds, let every node with token
  $t$ broadcast it.

  \ENDFOR

  \ENDIF

\end{algorithmic}
\end{algorithm}

\begin{lemma}
\label{lem:level.flow.broadcast}
Let $k \leq n$ tokens be at given source nodes and $v$ be an arbitrary
node. Then, all the tokens can be gathered at $v$ in the broadcast
model in at most $n+k$ rounds.
\end{lemma}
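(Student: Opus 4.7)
The plan is to reduce existence of a broadcast schedule to a path-packing problem in the broadcast evolution graph $\bexpandedG{2(n+k)+1}$ and then establish the needed capacity via a min-cut argument that parallels the multiport analogue proved earlier. First I would apply Lemma~\ref{lem:level.steiner} with $l = n+k$: since every token's unique destination is $v$, the required directed Steiner tree for each token collapses to a single directed path from its source copy at level $0$ to $v_{2(n+k)}$. It therefore suffices to pack $k$ paths in $\bexpandedG{2(n+k)+1}$ that are edge-disjoint with respect to the unit-capacity selection and broadcast edges (the buffer edges are uncapacitated). I would attach a super-source $v^*$ joined to each source copy $s_0$ by an edge whose capacity equals the number of tokens at $s$, giving total super-source capacity $k$, and reduce to showing that the min $(v^*, v_{2(n+k)})$-cut has value at least $k$; integrality of max flow then produces the desired path system, and Lemma~\ref{lem:level.steiner} converts it into an $(n+k)$-round broadcast schedule.

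For the cut bound, fix any cut $(S,\bar S)$ with $v^* \in S$ and $v_{2(n+k)} \in \bar S$. We may assume all source copies lie in $S$, else the super-source edges alone contribute at least $k$ to the cut. A finite cut cannot cross any infinite-capacity buffer edge, so the sets $S_j := \{u \in V : u_{2j} \in S\}$ are nondecreasing in $j$, contain all sources (so $|S_0| \ge 1$), and miss $v$ at level $n+k$ (so $|S_{n+k}| \le n-1$). Hence the number of ``growth'' rounds with $|S_j| > |S_{j-1}|$ is at most $n - 1 - |S_0|$, leaving at least $(n+k) - (n-1-|S_0|) = k+1+|S_0|$ ``no-growth'' rounds $j$ in which $S_j = S_{j-1}$ is a nonempty proper subset of $V$. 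For each such round, connectivity of $G_j$ supplies an edge $(u,w) \in G_j$ with $u \in S_{j-1}$ and $w \notin S_{j-1}$; depending on whether $u_{2j-1} \in S$ or $u_{2j-1} \notin S$, the cut crosses either the broadcast edge $(u_{2j-1}, w_{2j})$ or the selection edge $(u_{2(j-1)}, u_{2j-1})$, contributing one unit either way. Summing yields a cut of value at least $k+1 \ge k$, as required.

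The main subtlety, and the one substantive departure from the multiport proof of Lemma~\ref{lem:level.flow}, is handling the extra middle level $2j-1$ present in the broadcast evolution graph: a priori the cut might attempt to evade a contribution in some no-growth round by placing $u_{2j-1}$ cleverly relative to $S$. The case split above shows that exactly one of the two unit-capacity edges incident to $u_{2j-1}$ must cross $(S,\bar S)$ in every no-growth round, so the broadcast constraint (encoded by the unit-capacity selection edges) is absorbed without any loss in the round count, and the $n+k$ bound established in the multiport setting carries over verbatim.
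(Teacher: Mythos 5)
Your proof is correct and carries out precisely the ``analogous'' argument that the paper indicates but omits: apply the Steiner-tree packing reduction (Lemma~\ref{lem:level.steiner}, with trees degenerating to paths since $v$ is the only destination), attach a super-source, and run the same min-cut counting as in Lemma~\ref{lem:level.flow}, with the one new wrinkle — the intermediate odd level and unit-capacity selection edges — handled cleanly by the case split on whether $u_{2j-1}$ lies in $S$, which guarantees one unit-capacity edge crossing the cut in every no-growth round.
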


The proof is analogous to that for the multiport model and is omitted.

\junk{
\begin{theorem}
\label{thm:flow_based}
Algorithm~\ref{alg:flow_based} solves the $k$-gossip problem using
$O(\min\{n \sqrt{k \log n}, nk\})$ rounds with high probability in
the offline model.
\end{theorem}
}

\tOfflineBroadcast
\begin{proof}
It is trivial to see that if $k \leq \sqrt{\log n}$, then the
algorithm will end in $nk$ rounds and each node receives all the $k$
tokens. Assume $k > \sqrt{\log n}$. By Lemma~\ref{lem:level.flow}, all
the tokens can be sent to all the nodes in $S$ using $O(n \sqrt{k \log
  n})$ rounds. Now fix a node $v$ and a token $t$. Since token $t$ is
broadcast for $2n \sqrt{(\log n)/k}$ rounds, there is a set $S^t_v$ of
at least $2n \sqrt{(\log n)/k}$ nodes from which $v$ is reachable
within those rounds.  It is clear that if $S$ intersects $S^t_v$, $v$
will receive token $t$. Since the set $S$ was picked uniformly at
random, the probability that $S$ does not intersect $S^t_v$ is at most
\[ \frac{{n - 2n\sqrt{(\log n)/k} \choose 2\sqrt{k \log n}}}{{n \choose 2\sqrt{k \log n}}} < \left(\frac{n - 2n\sqrt{(\log n)/k}}{n}\right)^{2\sqrt{k \log n}} 
\le \frac{1}{n^4}.\]  
Thus every node receives every token with probability $1-1/n^3$. It is
also clear that the algorithm finishes in $O(n \sqrt{k \log n})$
rounds.
\end{proof}

Algorithm~\ref{alg:flow_based} can be derandomized using the technique
of conditional expectations, as shown in
Algorithm~\ref{alg:derandomize} and analyzed in
Lemma~\ref{lem:derandomize}.

Algorithm~\ref{alg:flow_based} can be derandomized using the standard
technique of conditional expectations, as shown in
Algorithm~\ref{alg:derandomize}.  Given a sequence of communication
graphs, if node $u$ broadcasts token $t$ for $\Delta$ rounds and every
node that receives token $t$ also broadcasts $t$ during that period,
then we say node $v$ is within $\Delta$ {\em broadcast distance} to
$u$ if and only if $v$ receives token $t$ by the end of round
$\Delta$. Let $S$ be a set of nodes, and $|S|\le 2 \sqrt{k \log
  n}$. We use $\dprob{u}{S}{T}$ to denote the probability that the
broadcast distance from node $u$ to set $X$ is greater than $2n
\sqrt{(\log n)/k}$, where $X$ is the union of $S$ and a set of
$2\sqrt{k\log n} - |S|$ nodes picked uniformly at random from $V
\setminus T$, and $\dsumprob{S}{T}$ denotes the sum, over all $u$ in
$V$, of $\dprob{u}{S}{T}$.

\begin{algorithm}[ht!]
\caption{Derandomized algorithm for Step~\ref{alg.step:random} in
  Algorithm~\ref{alg:flow_based}}
\label{alg:derandomize}
\begin{algorithmic}[1]
  \REQUIRE A sequence of communication graphs $G_i$, $i = 1, 2,
  \ldots$, and $k \ge \sqrt{\log n}$

  \ENSURE A set of $2\sqrt{k\log n}$ nodes $S$ such that the broadcast
  distance from every node $u$ to $S$ is within $2 n\sqrt{(\log
    n)/k}$.
  \medskip

  \STATE Set $S$ and $T$ be $\emptyset$.

  \FOR{each $v\in V$}

  \STATE $T = T \cup \{v\}$
  
  \IF{$\dsumprob{S \cup \{v\}}{T} \le \dsumprob{S}{T}$ \label{alg.step:cal}}

  \STATE $S = S\cup \{v\}$

  

  \ENDIF

  \ENDFOR

  \STATE Return $S$

\end{algorithmic}
\end{algorithm}

\begin{lemma}
\label{lem:derandomize}
The set $S$ returned by Algorithm~\ref{alg:derandomize} contains at
most $2\sqrt{k\log n}$ nodes, and the broadcast distance from every
node to $S$ is at most $2n\sqrt{(\log n)/k}$.
\end{lemma}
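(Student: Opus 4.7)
The plan is to invoke the method of conditional expectations, treating $\dsumprob{S}{T}$ as the pessimistic estimator to be controlled. The starting observation is that the probabilistic analysis already carried out inside the proof of Theorem~\ref{thm:offline_broadcast} gives the key numerical bound: when $X \subseteq V$ is a uniformly random size-$2\sqrt{k\log n}$ set, each fixed node has broadcast distance exceeding $2n\sqrt{(\log n)/k}$ to $X$ with probability at most $1/n^4$. By linearity of expectation, the expected number of such ``bad'' nodes is at most $1/n^3<1$, i.e., $\dsumprob{\emptyset}{\emptyset}<1$.

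Next I will show that $\dsumprob{S}{T}$ never increases during the loop. View $\dsumprob{S}{T}$ as the expected number of bad nodes under the partial commitment ``$S\subseteq X$, $(T\setminus S)\cap X=\emptyset$, and the remaining $2\sqrt{k\log n}-|S|$ coordinates of $X$ are drawn uniformly at random from $V\setminus T$''. For the next node $v$, the law of total expectation gives
\[
\dsumprob{S}{T} \;=\; p\cdot \dsumprob{S\cup\{v\}}{T\cup\{v\}} \;+\; (1-p)\cdot \dsumprob{S}{T\cup\{v\}},
\]
where $p=\Pr[v\in X\mid S,T]$. Hence at least one of the two quantities on the right is at most $\dsumprob{S}{T}$, and the comparison in Step~\ref{alg.step:cal} (which is evaluated after $v$ has been appended to $T$) selects the smaller one. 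A one-line induction then gives $\dsumprob{S}{V}<1$ at termination. Since at this point no randomness is left and the quantity is a nonnegative integer, it must equal $0$, which is precisely the broadcast-distance property claimed by the lemma.

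The subtlety I expect to be the main obstacle is the fixed-size constraint: we are derandomizing a uniform distribution over size-$2\sqrt{k\log n}$ subsets, not a product distribution, so we must additionally guarantee $|S|\le 2\sqrt{k\log n}$ throughout and $|S|=2\sqrt{k\log n}$ at the end. I plan to dispatch this with the convention $\dsumprob{S}{T}=+\infty$ whenever the underlying partial commitment is infeasible, that is, whenever $|S|>2\sqrt{k\log n}$ or $|S|+|V\setminus T|<2\sqrt{k\log n}$. Under this convention, an ``add $v$'' branch that would overshoot has value $+\infty$ and is rejected by the comparison, while a ``skip $v$'' branch that leaves too few remaining candidates also has value $+\infty$ and is similarly rejected, forcing $v$ into $S$. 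This preserves $|S|\le 2\sqrt{k\log n}$ throughout and forces $|S|=2\sqrt{k\log n}$ at termination, so the conditional-expectation argument above goes through unchanged.
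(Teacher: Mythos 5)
Your proof takes essentially the same route as the paper: $\dsumprob{S}{T}$ as the pessimistic estimator, the bound $\dsumprob{\emptyset}{\emptyset} \le 1/n^3 < 1$ inherited from the analysis in Theorem~\ref{thm:offline_broadcast}, the law-of-total-expectation decomposition to show the estimator never increases under the greedy choice, and integrality at termination to conclude the count is zero. Your handling of the fixed-cardinality constraint via the $+\infty$ convention is in fact more careful than the paper's proof, which simply asserts that $|S_l|$ equals the target size ``by construction'' without explaining why the greedy rule, as written, is guaranteed to neither overshoot nor undershoot; your convention makes that step rigorous and simultaneously keeps $\dsumprob{S}{T}$ well defined throughout the loop.
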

\begin{proof}
Let us view the process of randomly selecting $2\sqrt{k\log n}$ nodes
as a computation tree. This tree is a complete binary tree of height
$n$. There are $n+1$ nodes on any root-leaf path. The level of a node
is its distance from the root. The computation starts from the
root. Each node at the $i$th level is labeled by $b_i \in \{0,1\}$,
where 0 means not including node $i$ in the final set and 1 means
including node $i$ in the set. Thus, each root-leaf path, $b_1b_2\dots
b_n$, corresponds to a selection of nodes.  For a node $a$ in the
tree, let $S_a$ (resp., $T_a$) denote the sets of nodes that are
included (resp., lie) in the path from root to $a$.

By Theorem~\ref{thm:offline_broadcast}, we know that for the root node $r$,
we have $\dsumprob{\emptyset}{S_r} =\dsumprob{\emptyset}{\emptyset}\le
1/n^3$.  If $c$ and $d$ are the children of $a$, then $T_c$ = $T_d$,
and there exists a real $0 \le p \le 1$ such that for each $u$ in $V$,
$\dprob{u}{S_a}{T_a}$ equals $p \dprob{u}{S_c}{T_c} +
(1-p)\dprob{u}{S_d}{T_d}$.  Therefore, $\dsumprob{S_a}{T_a}$ equals $p
\dsumprob{S_c}{T_c} + (1-p) \dsumprob{S_d}{T_d}$.  We thus obtain that
$\min\{\dsumprob{S_c}{T_c},\dsumprob{S_d}{T_d}\} \le
\dsumprob{S_a}{T_a}$.  Since we set $S$ to be $X$ in $\{S_c, S_d\}$
that minimizes $\dsumprob{X}{T_c}$, we maintain the invariant that
$\dsumprob{S}{T} \le 1/n^3$.  In particular, when the algorithm
reaches a leaf $l$, we know $\dsumprob{S_l}{V}\le 1/n^3$.  But a leaf
$l$ corresponds to a complete node selection, so that
$\dprob{u}{S_l}{V}$ is 0 or 1 for all $u$, and hence
$\dsumprob{S_l}{V}$ is an integer.  We thus have $\dsumprob{S_l}{V} =
0$, implying that the broadcast distance from node $u$ to set $S_l$ is
at most $2n \sqrt{(\log n)/k}$ for every $l$.  Furthermore, $|S_l|$ is
$2 k \sqrt{\log n}$ by construction.

Finally, note that Step~\ref{alg.step:cal} of
Algorithm~\ref{alg:derandomize} can be implemented in polynomial time,
since for each $u$ in $V$, $\dprob{u}{S}{T}$ is simply the ratio of
two binomial coefficients with a polynomial number of bits.  Thus,
Algorithm~\ref{alg:derandomize} is a polynomial time algorithm with
the desired property.
\end{proof}
\else
We extend the notion of the evolution graph to the broadcast model and
show that finding a broadcast schedule for $k$-gossip can be reduced
to packing Steiner trees in the evolution graph.  We defer this
section to the full paper.
\fi

\section{Models for $k$-Gossip Problem}
\label{sec:models} 

The $k$-gossip problem in dynamic networks ia a fundamental problem in
distributed computing and is rich in terms of future research
directions. It can be studied in various models, with varying
difficulty and differing along different dimensions. We present a
discussion of the most important and interesting models. We structure
the discussion based on the different dimensions along which these
models differ, which also illustrates their power and weaknesses.

One of the most important dimension for k-gossip problems is the
adversarial model used. In general, we can consider three different
types of adversaries: {\em adaptive}, {\em oblivious} and {\em
  offline}.  An adaptive adversary can adapt to the steps of the
algorithm, and in particular, base its decisions on the current state
of token distribution while laying out the network. An oblivious
adversary, on the other hand, is required to lay out the entire
network sequece before the start of the protocol, which, however, is
revealed to the algorithm one at a time in successive rounds. The
above two adversaries are meaningful in the online setting of the
problem. An offline adversary, in contrast, not only lays out the
entire network sequence in advance, but this information is also
available to the algorithm before it starts.

The adaptive adversarial model can further be subdivided as strong,
intermediate or weak based on the order of execution of the steps of
the adversary and the algorithm in each round. In the strong adaptive
addversarial model, in each round and for each node, the algorithm is
first required to decide which token to broadcst from the set of
tokens it has obtained by the end of the previous round. The adversary
then lays out the network for the current round with the complete
knowledge of the token distribution till the end of the previous round
as well as all the choices made by the algorithm for the current
round. This is the strongest type of adversary and is the first model
studied in this paper. In contrast, in the weak adaptive adversarial
model, the adversary is first required to lay down the network for the
current round with the knowledge of the token distribution till the
end of the previous round, and this network is revealed to the
algorithm while making its decisions for the current round. In the
intermediate adaptive adversarial model, the adversary and the
algorithm are required to execute their steps in parallel. That is,
the adversary is required to lay down the network with the knowledge
of the token distribution till the end of the previous round but this
network is not revealed to the algorithm while making its choices for
the current round. This kind of adversary is intermediate between
strong and weak in its power, hence the name.

The oblivious adversarial model can be further classified as strong
or weak. While for both, the adversary lays out the entire netwrok
sequence in advance of the start of the protocol, the two differs in
when the algorithm is revealed the network for the current round. In
the case of strong oblivious adversarial model, in every round, the
algorithm is first decides which token to broadcast for each node from
the set of tokens it has till the end of the previous round. The
network for the current round is then revealed to it. In the weak
oblivious adversarial model, the network for the current round is
shown to the algorithm while making its decisions for the current
round.

Another dimension in which models for the $k$-gossip problem differ is
the broadcast Vs. multi-port model. In the broadcast model, every node
broadcast at most one token in each round which is received by all of
its neighbors. In contrast, the multi-port model allows each node to
send different tokens to different neighbors. Another dimension is the
use of randomness - the models can allow randomized algorithms or
restrict to deterministic ones.

We now list which combinations of different model dimensions make
sense. In both of the strong and intermediate adaptive adversarial
models, only broadcast algorithms makes sense and we can have both
deterministic or randomied kinds. It is worth noting that the strong
adaptive and intermediate adaptive models are the same when restricted
to deterministic algorithms, as the intermediate adaptive adversary
can always compute the decisions made by the algorithm in the current
round which essentially makes it a strong adaptive adversary. In the
weak adaptive adversarial model, both broadcast and multi-port
algorithms make sense, and both of the kind can be either deterministic
or randomized.

In the strong oblivious adversarial model, only broadcast algorithms
make sense which can be either deterministic or randomized. In the
weak oblivious adversarial model and the offlie adversarial model, we
can have broadcast or multi-port algorithms and each kind can be
either deterministic or randomized. 

Our lower bound holds in the strong adaptive adversarial model against
deterministic as well as randomized broadcast algorithms. By the
equivalence between strong adaptive and intermediate adaptive models
for deterministic algorithms stated above, our lower bounds also
extend to intermediate adaptive adversarial model against
deterministic broadcast algorithms. We present a randomized multi-port
algorithm in the weak adaptive adversarial model where we start from a
well-mixed token distribution and assume the ability of $O(\log n)$
communication steps per round.

\section{Concluding remarks and open questions}
We studied the fundamental $k$-gossip problem in dynamic networks and
showed a lower bound of $\Omega(n + nk/\log n)$ rounds for any token
forwarding algorithm against a strongly adaptive adversary,
significantly improving over the previous best bound of $\Omega(n\log
k)$~\cite{kuhn+lo:dynamic} for sufficiently large $k$.  Our lower
bound matches the known upper bound of $O(nk)$ up to a logarithmic
factor, and establishes a near-linear factor separation between
token-forwarding and network-coding based algorithms.  While our bound
rules out significantly faster algorithms in the strongly adaptive
adversary model, we complement our lower bound by presenting the
\symdiff\ protocol for a weakly adaptive adversary.  We show that
\symdiff\ is near-optimal when the starting distribution is
well-mixed.  Intuitively, a well-mixed distribution captures the
``hard" regime for information spreading in the adversarial setting,
when most nodes have most of the tokens.  Perhaps, the most
interesting problem left open by our work is the analysis of
\symdiff\ in the weakly adaptive adversary model for an {\em
  arbitrary}\/ starting distribution. \junk{ A main conjecture of our
  paper is that the \symdiff protocol gives near-linear time for {\em
    any} starting distribution.}  \junk{An interesting feature of
  \symdiff protocol is that it needs techniques from communication
  complexity for efficient implementation.}

\junk{
Our almost-tight lower bound of $\Omega(n + nk/\log n)$ rounds extends
to randomized algorithms with an adaptive adversary that makes its
decision in each round with knowledge of the randomness of the
algorithm in that round (but without knowledge of future randomness).
}

We also presented offline algorithms for $k$-gossip.  An important
intermediate model between the offline setting and the adaptive
adversary models is the oblivious adversary model in which the
adversary lays the dynamic network in advance (as in the offline
setting), but the changing topology is revealed to the algorithm one
round at a time.  Finally, this paper has focused on models in which
at most one token is sent per edge per round and the network can
change every round.  Subsequent to the announcement of our lower
bound~\cite{arxiv}, the argument has been extended to the model where
multiple tokens can be broadcast and the dynamic network is required
to contain a stable subgraph for multiple rounds~\cite{personal}.

\junk{For the important practical case of small token sizes (e.g., $O(\log
n)$ bits) even the best online gossip algorithm we know based on
network coding takes $O(n^2/\log n)$ rounds~\cite{haeupler+k:dynamic}.
In contrast, we show that in the offline setting there exist
 token-forwarding algorithms that run in
$O(n^{1.5}\sqrt{\log n})$ time.  What is the best possible time
achievable for gossip with small token sizes?}

\junk{
Finally, a major question is whether we can design fully distributed
fast (e.g., $O(\min\{n\sqrt{\log k}, nk\})$ time) token-forwarding
algorithms for general dynamic networks under an offline adversary or,
even better, an oblivious adversary.  We believe that our first
centralized algorithm can be made distributed and will be useful in
resolving this question.}

\junk{Finally, in a recent work \cite{rw-podc}, our Algorithm 1 has been
directly adapted to yield a distributed token forwarding algorithm for
information spreading, albeit in a {\em restricted} model of dynamic
networks where it is assumed that the spectral properties of the
networks don't change.  This distributed algorithm has subquadratic
running time only under certain conditions (e.g., the network should
have small dynamic diameter) and is slower in general than the
$O(\min\{n\sqrt{\log k}, nk\})$ round algorithm of this paper.  A
major question is whether we can design fully distributed fast (e.g.,
$O(\min\{n\sqrt{\log k}, nk\})$ time) token-forwarding algorithms for
general dynamic networks under an offline adversary or, even better,
an oblivious adversary.  We believe that the techniques introduced in
this paper will be useful in tackling this question.
}

\bibliographystyle{alpha}
\bibliography{tokenforwarding,sym_diff_sampling}

\end{document}